\newcommand{\ignore}[1]{}
\newtheorem{theorem}{Theorem}
\newtheorem{lemma}[theorem]{Lemma}
\newtheorem{claim}{Claim}
\renewcommand{\Pr}{{\bf Pr}}
\newcommand{\E}{{\bf E}}
\newcommand{\dist}{{\rm dist}}
\DeclareMathAlphabet\mathbfcal{OMS}{cmsy}{b}{n}
\newcommand{\accepts}{\mbox{\ accepts \ }}
\newcommand{\TT}{{\boldsymbol T}}
\newcommand{\WW}{{\boldsymbol W}}
\newcommand{\ZZ}{{\boldsymbol Z}}
\newcommand{\zz}{{\boldsymbol z}}
\newcommand{\YY}{{\boldsymbol Y}}
\newcommand{\DD}{{\mathbfcal D}}
\newcommand{\yy}{{\boldsymbol y}}
\newcommand{\bb}{{\boldsymbol b}}
\newcommand{\gd}{{\boldsymbol g}}
\renewcommand{\gg}{{\boldsymbol g}}
\newcommand{\ff}{{\boldsymbol f}}
\newcommand{\hh}{{\boldsymbol h}}
\renewcommand{\SS}{{\boldsymbol S}}
\newcommand{\JJ}{{\boldsymbol J}}
\newcommand{\ggamma}{{\boldsymbol \gamma}}
\newcommand{\bone}{{\boldsymbol 1}}
\newcommand{\aalpha}{{\boldsymbol \alpha}}
\newcommand{\bbeta}{{\boldsymbol \beta}}
\newcommand{\oomega}{{\boldsymbol \omega}}
\renewcommand{\dist}{{\rm dist}}
\newcommand{\D}{{\cal D}}
\newcommand{\YES}{{\cal YES}}
\newcommand{\NO}{{\cal NO}}
\newcommand{\JUNTA}{{\cal JUNTA}}
\begin{document}

\title{Lecture Note on LCSSX's Lower Bounds for \\ Non-Adaptive Distribution-free Property Testing}
\author{{\bf Nader H. Bshouty}\\ Dept. of Computer Science\\ Technion,  Haifa, 32000\\
}

\maketitle
\begin{abstract}
In this lecture note we give Liu-Chen-Servedio-Sheng-Xie's (LCSSX) lower bound for property testing in the non-adaptive distribution-free model~\cite{LiuCSSX18}.
\end{abstract}

\section{Inroduction}
Here we give the following LCSSX's lower bound (Zhengyang Liu, Xi~Chen, Rocco~A. Servedio, Ying Sheng, and Jinyu Xie.
Distribution-free junta testing.)
\begin{theorem}\cite{LiuCSSX18} Let $k\ge 10$. Let $C$ be a class of boolean functions $f:\{0,1\}^n\to \{0,1\}$ that contains all the $k$-junta functions where $n\ge 15+2\log\log |C|$. Any non-adaptive algorithm that distribution-free $(1/3)$-tests $C$ must have query complexity at least
$$q= \frac{1}{8(1+2\lambda)^{k/2}}\cdot {2^{k/2}},$$ where
$$\lambda=\sqrt{\frac{5+\ln\ln |C|+k/2}{n}}.$$

In particular, when $\log\log|C|=o(n)$ then\footnote{Because $C$ contains all the $k$-junta functions, if $\log\log |C|=o(n)$ then $k=o(n)$} $$q={(2-o_n(1))^{k/2}}.$$
\end{theorem}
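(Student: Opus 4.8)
\section*{Proof strategy}

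The approach is Yao's minimax principle. I will exhibit a distribution $\D_{\mathrm{yes}}$ on ``yes'' instances $(f,\DD)$ with $f\in C$, and a distribution $\D_{\mathrm{no}}$ on instances $(f,\DD)$ that, with probability $1-o(1)$, are ``no'' instances (i.e.\ $f$ is $(1/3)$-far from $C$ with respect to $\DD$), and show that for every fixed deterministic non-adaptive algorithm that makes fewer than $q=\tfrac{1}{8(1+2\lambda)^{k/2}}2^{k/2}$ queries in total (each drawn sample counting as one query), the distribution of its transcript --- the sampled points with their $f$-labels together with the chosen query points with their $f$-labels --- is within total variation $o(1)$ of being the same in the two worlds. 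A correct $(1/3)$-tester accepts yes instances with probability $\ge\tfrac23$ and rejects no instances with probability $\ge\tfrac23$, hence would separate $\D_{\mathrm{yes}}$ from $\D_{\mathrm{no}}$ by $\tfrac23-\tfrac13-o(1)=\tfrac13-o(1)$, exceeding that total variation; so no correct non-adaptive tester uses fewer than $q$ queries. (If $(1+2\lambda)^{k/2}\ge 2^{k/2}/8$ then $q<1$ and there is nothing to prove, so assume $\lambda<\tfrac12$, in particular $n/2-\lambda n>0$.)

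In both worlds $\DD$ is uniform over a uniformly random set $A\subseteq\{0,1\}^n$ of size $N=\Theta(\ln|C|)$; this is legitimate because $n\ge 15+2\log\log|C|$ guarantees $N\le 2^n$. A yes instance is $f=h\circ\pi_J$ for a uniformly random $k$-subset $J\subseteq[n]$ and a uniformly random $h\colon\{0,1\}^k\to\{0,1\}$ --- a genuine $k$-junta, hence in $C$. A no instance keeps this $\DD$ and uses an $f$ whose labels on $A$ behave like fresh uniform bits, so that for each fixed $g\in C$ the set $\{x\in A:f(x)\ne g(x)\}$ has size roughly $\mathrm{Bin}(N,\tfrac12)$, whence (Chernoff) $\dist_\DD(f,g)\ge\tfrac13$ except with probability $e^{-\Omega(N)}$; a union bound over all $|C|$ functions of $C$ then makes $f$ $(1/3)$-far from $C$ except with probability $o(1)$. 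This is the only place $|C|$ itself (not merely the number of $k$-juntas) is used, and it is what forces $N=\Theta(\ln|C|)$ and hence the $\ln\ln|C|=\Theta(\ln N)$ inside $\lambda$. The genuinely delicate requirement on a no instance is that, although $f$ is globally far from $C$, every short list of labelled domain points the tester can actually obtain must be consistent with \emph{some} $k$-junta; the natural attempt --- $f$ a random $k$-junta everywhere off $A$ and fresh random bits on $A$ --- is junta-consistent off $A$ but is broken by a tester that queries a neighbour of a point it has sampled in $A$, so one needs a more careful pair of instances, and making that work is the technical core.

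The reason the threshold sits at $\approx 2^{k/2}$ is a ``collision/birthday'' effect. The tester's transcript is a list of at most $q-1$ labelled points, among which the only \emph{relevant} pairs are those pairing a sampled point with another of the tester's points. By a Chernoff and union bound over the at most $\binom q2$ relevant pairs, with probability $1-o(1)$ every relevant pair lies at Hamming distance at least $n/2-\lambda n$ --- the failure probability here is at most $\binom q2 e^{-2\lambda^2 n}$, and $\lambda^2 n=5+\ln\ln|C|+k/2$ is calibrated so that $e^{-2\lambda^2 n}=e^{-10}(\ln|C|)^{-2}e^{-k}$ swallows the $\Theta((\ln|C|)^2)$ and $2^k$ factors hidden in the union bounds. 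Granting this event, one shows that the yes- and no-world transcripts are identically distributed \emph{unless} a sampled point shares its projection onto the hidden $k$-set $J$ with another of the tester's points; a relevant pair at Hamming distance $d\ge n/2-\lambda n$ does so with probability $\binom{n-d}{k}/\binom nk\le\bigl((n-d)/n\bigr)^{k}\le\bigl((1+2\lambda)/2\bigr)^{k}$, so summing over the at most $\binom q2$ relevant pairs gives collision probability at most $\tfrac12 q^2(1+2\lambda)^k/2^k$, which is below $\tfrac1{128}$ exactly when $q<\tfrac{1}{8(1+2\lambda)^{k/2}}2^{k/2}$. Adding the two $o(1)$ events above bounds the transcript total variation as claimed.

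I expect the main obstacle to be precisely the construction and analysis of the no instances: obtaining global $(1/3)$-farness from all of $C$ simultaneously with local $k$-junta-consistency of every transcript the tester could generate --- in particular defeating queries placed adjacent to sampled points --- and then reconciling the genericity radius $n/2-\lambda n$, the support size $N=\Theta(\ln|C|)$, the admissible failure probabilities, and the leading constant $\tfrac18$; that reconciliation is what produces the hypotheses $k\ge 10$ and $n\ge 15+2\log\log|C|$ and the stated form of $\lambda$. The ``in particular'' is then routine: $C$ contains all $k$-juntas, so $\log_2\log_2|C|\ge k-O(1)$, whence $\log\log|C|=o(n)$ forces $k=o(n)$ and $\lambda^2 n=5+\ln\ln|C|+k/2=o(n)$, so $\lambda\to0$ and $k\lambda=o(k)$; then $(1+2\lambda)^{k/2}\le e^{k\lambda}=2^{o(k)}$, giving $q\ge 2^{-3}\cdot 2^{k/2-o(k)}=(2-o_n(1))^{k/2}$.
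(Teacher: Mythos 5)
Your overall strategy coincides with the paper's: a Yao-style reduction to deterministic non-adaptive testers, a $\YES$ distribution of random $k$-juntas with $\DD$ uniform over a random support of size $\Theta(\ln|C|)$, a $\NO$ distribution whose labels on the support are fresh uniform bits, Chernoff plus a union bound over $|C|$ for $(1/3)$-farness (exactly where $\ln\ln|C|$ enters $\lambda$), a birthday bound at $2^{k/2}$ for collisions of projections onto the hidden set $\JJ$, and the bound $\binom{(0.5+\lambda)n}{k}/\binom{n}{k}\le(0.5+\lambda)^k$ for a pair at Hamming distance exceeding $(0.5-\lambda)n$ to collide on $\JJ$. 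The calibration of $\lambda$ and the ``in particular'' clause are handled correctly.

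However, there is a genuine gap, which you flag yourself: you never construct the $\NO$ instances, and that construction is the core of the theorem rather than a deferable detail. The paper's resolution is to draw a background junta $\hh\gets\JUNTA_{\JJ}$ and independent bits $\ggamma(y)$ for $y\in\SS$, and to set $\gd(x)=\ggamma(y)$ not only at $x=y\in\SS$ but at every $x$ with $x_\JJ=y_\JJ$ and $d(x,y)\le(0.5-\lambda)n$, with $\gd=\hh$ elsewhere; each planted bit is smeared over the intersection of its section with a Hamming ball of radius $(0.5-\lambda)n$, which is precisely what defeats the ``query a neighbour of a sampled point'' attack you identify. The indistinguishability argument then needs more than the single projection-collision event you describe. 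The paper introduces a hybrid algorithm $A'$ that answers black-box queries with a fresh junta from $\JUNTA_{\YY,\aalpha,\JJ}$ (uniform over juntas on $\JJ$ consistent with the observed samples) and conditions on three events: $E_0$, the samples are scattered by $\JJ$; $E_1$, every query point is at distance more than $(0.5-\lambda)n$ from every \emph{unsampled} point of $\SS$; $E_2$, every query point sharing its $\JJ$-projection with a sampled point lies within distance $(0.5-\lambda)n$ of it. Event $E_2$ (via the smearing) forces $\gd$ on such query points to reproduce the already-observed label $\alpha_i$, matching the junta behaviour, while $E_1$ prevents query points from picking up the hidden $\ggamma$-bits of unseen support points; bounding $\Pr[\overline{E_1}]$ requires a union bound over $mq$ pairs with $m=18\ln|C|$, not over the $\binom{q}{2}$ pairs among the tester's own points that your transcript analysis considers. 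Only given $E_0\wedge E_1\wedge E_2$ is the $\NO$-world transcript distributed exactly as a junta transcript, so your assertion that the transcripts agree ``unless a sampled point shares its projection with another of the tester's points'' is, as written, unproven, and the proposal does not yet constitute a proof.
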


The proof in this note is the same as of LCSSX~\cite{LiuCSSX18}.

For the definition of the model and other definitions, read from~\cite{LiuCSSX18} Subsection ``{\bf Distribution-free property testing}'' in Section 1 and Section 2
and Subsection ``{\bf Junta and literals}'' in Section 2.
For other results when $C$ is the set of $k$-juntas read the introduction in~\cite{Bshouty19}.

\section{Notations}
We follow the same notations as in~\cite{LiuCSSX18}. Denote $[n]=\{1,2,\ldots,n\}$. For $X\subset [n]$ we denote by $\{0,1\}^X$
the set of all binary strings of
length $|X|$ with coordinates indexed by $i\in X$. For $x\in \{0,1\}^n$ and $X\subseteq [n]$ we write $x_X\in\{0,1\}^{X}$ to denote the projection of $x$ over coordinates in $X$.

Given a sequence $Y=(y^{(i)}:i\in [q])$ of $q$ strings in $\{0,1\}^n$ and a Boolean function $\phi:\{0,1\}^n\to \{0,1\}$, we write $\phi(Y)$ to denote the $q$-bit string $\alpha$ with $\alpha_i=\phi(y^{(i)})$ for $i\in [q]$. For a distribution~$\D$, we write $\yy\gets \D$ to denote that $\yy$ is a draw from the distribution $\D$ and $\YY=(\yy^{(i)}:i\in [q])\gets \D^q$ to denote a sequence of $q$ independent draws from the same probability distribution $\D$.

For convenience, we refer to an algorithm as a $q$-query algorithm if it makes $q$ sample queries and
$q$ black-box queries each. Such algorithms are clearly at least as powerful as those that make $q$
queries in total.
\section{Preliminary Results}
In this section we give some preliminary results

\subsection{Chernoff Bound}
We will use the following version of Chernoff Bound
\begin{lemma}\label{Chernoff}{\bf Chernoff's Bound}. Let $X_1,\ldots, X_m$ be independent random variables taking values in $\{0, 1\}$. Let $X=\sum_{i=1}^mX_i$ denotes their sum and let $\mu = \E[X]$ denotes the sum's expected value. Then
\begin{eqnarray}\Pr[X>(1+\eta)\mu]\le  \begin{cases} e^{-\frac{\eta^2\mu}{3}} &\mbox{if\ } 0< \eta\le 1 \\
e^{-\frac{\eta \mu}{3}} & \mbox{if\ } \eta>1 \end{cases} .\label{Chernoff1}
\end{eqnarray}
For $0\le \eta\le 1$ we have
\begin{eqnarray}
\Pr[X<(1-\eta)\mu]\le  e^{-\frac{\eta^2\mu}{2}}.\label{Chernoff2}
\end{eqnarray}
\end{lemma}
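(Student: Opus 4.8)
The plan is the standard Bernstein/Chernoff moment-generating-function argument, reduced at the end to two elementary one-variable estimates. First I would treat the upper tail \eqref{Chernoff1}. For any $t>0$, applying Markov's inequality to the non-negative random variable $e^{tX}$ gives $\Pr[X>(1+\eta)\mu]=\Pr[e^{tX}>e^{t(1+\eta)\mu}]\le e^{-t(1+\eta)\mu}\,\E[e^{tX}]$. By independence of the $X_i$ we have $\E[e^{tX}]=\prod_{i=1}^m\E[e^{tX_i}]$, and writing $p_i=\Pr[X_i=1]$ we get $\E[e^{tX_i}]=1+p_i(e^t-1)\le e^{p_i(e^t-1)}$ from the inequality $1+x\le e^x$. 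Multiplying these bounds and using $\sum_i p_i=\mu$ yields $\E[e^{tX}]\le e^{\mu(e^t-1)}$, so $\Pr[X>(1+\eta)\mu]\le\exp\!\big(\mu(e^t-1-t(1+\eta))\big)$. The exponent $e^t-1-t(1+\eta)$ is minimized at $t=\ln(1+\eta)>0$, where it equals $\eta-(1+\eta)\ln(1+\eta)$, so $\Pr[X>(1+\eta)\mu]\le\big(e^{\eta}/(1+\eta)^{1+\eta}\big)^{\mu}$.

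Taking logarithms, \eqref{Chernoff1} reduces to showing $\phi(\eta):=\eta-(1+\eta)\ln(1+\eta)\le-\eta^2/3$ for $0<\eta\le 1$ and $\phi(\eta)\le-\eta/3$ for $\eta>1$. For the first, set $f(\eta)=\phi(\eta)+\eta^2/3$; then $f(0)=0$, $f'(\eta)=2\eta/3-\ln(1+\eta)$ with $f'(0)=0$, and $f''(\eta)=2/3-1/(1+\eta)$ is negative for $\eta<1/2$ and positive for $\eta>1/2$, so $f'$ decreases from $f'(0)=0$ up to $\eta=1/2$ and then increases, but $f'(1)=2/3-\ln 2<0$, whence $f'\le 0$ on all of $[0,1]$; therefore $f$ is non-increasing and $f\le f(0)=0$. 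For the second, set $g(\eta)=\phi(\eta)+\eta/3$; then $g'(\eta)=1/3-\ln(1+\eta)<0$ for $\eta>1$ since $\ln 2>1/3$, and $g(1)=4/3-2\ln 2<0$, so $g\le 0$ on $[1,\infty)$. This proves \eqref{Chernoff1}.

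Finally, \eqref{Chernoff2} follows from the mirror computation applied to $-X$: for $t>0$, $\Pr[X<(1-\eta)\mu]=\Pr[e^{-tX}>e^{-t(1-\eta)\mu}]\le e^{t(1-\eta)\mu}\,\E[e^{-tX}]$, and the same product bound gives $\E[e^{-tX}]\le e^{\mu(e^{-t}-1)}$, hence $\Pr[X<(1-\eta)\mu]\le\exp\!\big(\mu(e^{-t}-1+t(1-\eta))\big)$. For $0\le\eta<1$ the exponent is minimized at $t=-\ln(1-\eta)\ge 0$, yielding $\Pr[X<(1-\eta)\mu]\le\big(e^{-\eta}/(1-\eta)^{1-\eta}\big)^{\mu}$ (the case $\eta=1$ being trivial as $X\ge 0$). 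Taking logs, it remains to verify $h(\eta):=-\eta-(1-\eta)\ln(1-\eta)+\eta^2/2\le 0$ on $[0,1)$; here $h(0)=0$, $h'(\eta)=\eta+\ln(1-\eta)$ with $h'(0)=0$, and $h''(\eta)=-\eta/(1-\eta)\le 0$, so $h'\le 0$, $h$ is non-increasing, and $h\le h(0)=0$.

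The probabilistic content here is entirely routine; the main obstacle is purely the bookkeeping in the elementary calculus estimates, and in particular the sign of $f'$ on $[0,1]$, which is not a one-line derivative check but needs the little two-phase monotonicity argument (decrease then increase, with the endpoint value staying negative) indicated above.
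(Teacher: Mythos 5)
Your proof is correct: the paper states this Chernoff bound as a standard fact without giving any proof, and your moment-generating-function derivation (Markov applied to $e^{tX}$, the bound $1+p_i(e^t-1)\le e^{p_i(e^t-1)}$, optimization at $t=\ln(1+\eta)$ resp.\ $t=-\ln(1-\eta)$) together with the elementary calculus verifications of $\eta-(1+\eta)\ln(1+\eta)\le-\eta^2/3$ on $(0,1]$, $\le-\eta/3$ on $(1,\infty)$, and $-\eta-(1-\eta)\ln(1-\eta)\le-\eta^2/2$ on $[0,1)$ all check out (including the sign checks $2/3-\ln 2<0$ and $4/3-2\ln 2<0$). This is the standard argument, so it matches what the paper implicitly relies on.
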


\subsection{Some Results in Probability}
Let $D$ be a probability distribution over a finite set $\Omega$. We will use the following (In the following two lemmas $\Pr=\Pr_D$)
\begin{lemma}\label{l0}
Let $A,B\subseteq \Omega$ where $B\not=\O$. Then $$\Pr[A|B]-\Pr[\overline{B}]\le \Pr[A]\le \Pr[A|B]+\Pr[\overline{B}].$$
\end{lemma}
\begin{proof} We have
\begin{eqnarray*}
\Pr[A]&=& \Pr[A|B]\Pr[B]+\Pr[A|\overline{B}]\Pr[\overline{B}]\\
&\le& \Pr[A|B]+\Pr[\overline{B}]
\end{eqnarray*}
and
\begin{eqnarray*}
\Pr[A]&=& 1-\Pr[\overline{A}]\\
&\ge& 1-\Pr[\overline{A}|B]-\Pr[\overline{B}]=\Pr[{A}|B]-\Pr[\overline{B}].
\end{eqnarray*}
\end{proof}

\begin{lemma}\label{l1}
Let $A,B,W\subseteq \Omega$ where $W\not=\O$. If $\Pr[A|W]\le \Pr[B|W]$ then $\Pr[A]\le \Pr[B]+\Pr[\overline{W}].$
\end{lemma}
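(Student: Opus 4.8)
The plan is to argue directly from the law of total probability rather than by a black-box appeal to Lemma~\ref{l0}, since the latter would introduce a spurious factor of $2$ in front of $\Pr[\overline{W}]$. The key point is to keep the factor $\Pr[W]$ attached to the conditional probabilities for as long as possible instead of bounding it by $1$ at the outset.

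First I would expand $\Pr[A]=\Pr[A\mid W]\,\Pr[W]+\Pr[A\mid \overline{W}]\,\Pr[\overline{W}]$ and bound the second term by $\Pr[\overline{W}]$, which gives $\Pr[A]\le \Pr[A\mid W]\,\Pr[W]+\Pr[\overline{W}]$. Then I would invoke the hypothesis $\Pr[A\mid W]\le \Pr[B\mid W]$ to pass to $\Pr[A]\le \Pr[B\mid W]\,\Pr[W]+\Pr[\overline{W}]$. Finally I would observe that $\Pr[B\mid W]\,\Pr[W]\le \Pr[B\mid W]\,\Pr[W]+\Pr[B\mid \overline{W}]\,\Pr[\overline{W}]=\Pr[B]$, which yields the claimed inequality.

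There is no real obstacle here; the only thing to be careful about is, as noted above, not to discard the $\Pr[W]\le 1$ factor too early, since doing so would only give the weaker bound $\Pr[A]\le \Pr[B]+2\,\Pr[\overline{W}]$. The hypothesis $W\neq\emptyset$ enters only to guarantee that the conditional probabilities $\Pr[\,\cdot\mid W]$ are well-defined.
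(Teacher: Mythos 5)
Your proof is correct and is essentially identical to the paper's: both expand $\Pr[A]$ by total probability over $W$, bound $\Pr[A\mid\overline{W}]\Pr[\overline{W}]$ by $\Pr[\overline{W}]$, apply the hypothesis on $W$, and then absorb $\Pr[B\mid W]\Pr[W]$ into $\Pr[B]$. Your remark about keeping the $\Pr[W]$ factor (rather than invoking Lemma~\ref{l0} twice and losing a factor of $2$) is exactly the point of the paper's direct computation.
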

\begin{proof} We have
\begin{eqnarray*}
\Pr[A]&=& \Pr[A|W]\Pr[W]+\Pr[A|\overline{W}]\Pr[\overline{W}]\\
&\le& \Pr[B|W]\Pr[W]+\Pr[\overline{W}]\le \Pr[B]+\Pr[\overline{W}].
\end{eqnarray*}
\end{proof}

\begin{lemma}\label{l2} {\bf Birthday Paradox}: Let $X$ be a finite set and let $\YY$ be a set obtained by making $r$ draws from $X$ uniformly at random with replacement. Then
$$\Pr[|\YY|\not=r]\le \frac{r^2}{2|X|}.$$
\end{lemma}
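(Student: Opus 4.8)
The plan is to bound the failure event $\{|\YY|\neq r\}$ by the event that some collision occurs among the $r$ draws, and then apply a union bound over all pairs. Write $\yy^{(1)},\ldots,\yy^{(r)}$ for the $r$ independent uniform draws from $X$, so that $\YY=\{\yy^{(1)},\ldots,\yy^{(r)}\}$ as a set. The key observation is that $|\YY|\neq r$ holds if and only if $\yy^{(i)}=\yy^{(j)}$ for some pair $i<j$ in $[r]$; that is, $|\YY|<r$ exactly when the multiset of draws has a repeated element.

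Next I would estimate the probability of each collision event. For a fixed pair $i<j$, since $\yy^{(i)}$ and $\yy^{(j)}$ are independent and uniform on $X$, we have $\Pr[\yy^{(i)}=\yy^{(j)}]=1/|X|$ (condition on the value of $\yy^{(i)}$; whatever it is, $\yy^{(j)}$ hits it with probability $1/|X|$). Then by the union bound over the $\binom{r}{2}$ pairs,
$$\Pr[|\YY|\neq r]\le \sum_{1\le i<j\le r}\Pr[\yy^{(i)}=\yy^{(j)}]=\binom{r}{2}\cdot\frac{1}{|X|}=\frac{r(r-1)}{2|X|}\le \frac{r^2}{2|X|},$$
which is the claimed bound.

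There is essentially no obstacle here; the only thing to be slightly careful about is the logical equivalence in the first step (that a smaller image size is the same as the existence of a repeated pair among the $r$ ordered draws), and the fact that the union bound is used rather than the exact product formula $\prod_{t=0}^{r-1}(1-t/|X|)$ for the no-collision probability, since the union bound already suffices for the stated inequality. If one preferred, an alternative route would be to bound $\Pr[|\YY|=r]=\prod_{t=0}^{r-1}(1-t/|X|)\ge 1-\sum_{t=0}^{r-1}t/|X|=1-\frac{r(r-1)}{2|X|}$ using $\prod(1-a_t)\ge 1-\sum a_t$ for $a_t\in[0,1]$, giving the same conclusion; but the union-bound argument is the cleanest and is the one I would present.
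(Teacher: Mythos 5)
Your primary argument is correct but routes through a union bound over the $\binom{r}{2}$ collision events, whereas the paper writes the no-collision probability exactly as the product $\prod_{i=1}^{r-1}\bigl(1-\tfrac{i}{|X|}\bigr)$ and then lower-bounds it via the Weierstrass-type inequality $(1-x_1)\cdots(1-x_j)\ge 1-(x_1+\cdots+x_j)$ for $x_i\in[0,1]$ --- which is precisely the ``alternative route'' you sketch in your last paragraph. The two arguments are morally the same computation (both reduce to $\sum_{i=1}^{r-1} i/|X| \le r^2/(2|X|)$), but your union-bound version is marginally more robust: it needs no hypothesis on the sign of the factors $1-i/|X|$, whereas the product inequality formally requires each $i/|X|\le 1$ (harmless here, since for $r>|X|$ the claimed bound exceeds $1$ and is vacuous, but a caveat the paper leaves implicit). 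Either version is a complete and acceptable proof.
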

\begin{proof} Since for $x_1,\ldots,x_{j}\in [0,1]$, $(1-x_1)\cdots(1-x_{j})\ge 1-(x_1+\cdots+x_{j})$, we have
$$\Pr[|\YY|\not=r]=1-\prod_{i=1}^{r-1} \left(1-\frac{i}{|X|}\right)\le \frac{r^2}{2|X|}.$$
\end{proof}

\subsection{Total Variation Distance}
Let $D_1$ and $D_2$ be two probability distributions over a finite set $\Omega$. The {\it total variation distance}
between $D_1$ and $D_2$ (also called statistical distance) is
$$\|D_1-D_2\|_{tv}:=\frac{1}{2}\sum_{\omega\in \Omega}|\underset{D_1}{\Pr}[\omega]-\underset{D_2}{\Pr}[\omega]|.$$
The following lemmas are well known and easy to prove
\begin{lemma}\label{L2} The total variation distance between $D_1$ and $D_2$ is
$$\|D_1-D_2\|_{tv}=\max_{E\subseteq  \Omega}|\underset{D_1}{\Pr}[E]-\underset{D_2}{\Pr}[E]|.$$
\end{lemma}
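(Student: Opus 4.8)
The plan is to prove the identity by exhibiting the extremal event explicitly. Define the set on which $D_1$ dominates, $\Omega^+=\{\omega\in\Omega:\Pr_{D_1}[\omega]\ge\Pr_{D_2}[\omega]\}$, and its complement $\Omega^-=\Omega\setminus\Omega^+$. For $\omega\in\Omega$ write $\delta(\omega)=\Pr_{D_1}[\omega]-\Pr_{D_2}[\omega]$, so that for every event $E\subseteq\Omega$ we have $\Pr_{D_1}[E]-\Pr_{D_2}[E]=\sum_{\omega\in E}\delta(\omega)$. Since $\delta(\omega)\ge 0$ precisely on $\Omega^+$, this sum is maximized over all choices of $E$ by taking $E=\Omega^+$ (keep all nonnegative terms, drop the negative ones) and minimized by taking $E=\Omega^-$. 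Hence for every $E$, $|\Pr_{D_1}[E]-\Pr_{D_2}[E]|\le\max\bigl(\sum_{\omega\in\Omega^+}\delta(\omega),\,-\sum_{\omega\in\Omega^-}\delta(\omega)\bigr)$, and both of these values are actually attained (by $\Omega^+$ and $\Omega^-$ respectively), so $\max_{E\subseteq\Omega}|\Pr_{D_1}[E]-\Pr_{D_2}[E]|$ equals that maximum of the two one-sided sums.

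The next step is the observation that makes the two one-sided sums collapse into one: because $D_1$ and $D_2$ are probability distributions, $\sum_{\omega\in\Omega}\delta(\omega)=1-1=0$, and therefore $\sum_{\omega\in\Omega^+}\delta(\omega)=-\sum_{\omega\in\Omega^-}\delta(\omega)$. Consequently the two candidates coincide and $\max_{E\subseteq\Omega}|\Pr_{D_1}[E]-\Pr_{D_2}[E]|=\sum_{\omega\in\Omega^+}\delta(\omega)$.

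Finally I reconcile this with the definition of $\|D_1-D_2\|_{tv}$. Splitting the defining sum according to the sign of $\delta$ gives $\sum_{\omega\in\Omega}|\delta(\omega)|=\sum_{\omega\in\Omega^+}\delta(\omega)-\sum_{\omega\in\Omega^-}\delta(\omega)=2\sum_{\omega\in\Omega^+}\delta(\omega)$, where the last equality is exactly the zero-sum identity from the previous paragraph. Dividing by $2$ yields $\|D_1-D_2\|_{tv}=\sum_{\omega\in\Omega^+}\delta(\omega)=\max_{E\subseteq\Omega}|\Pr_{D_1}[E]-\Pr_{D_2}[E]|$, which is the claim. There is no genuine obstacle in this argument; the only point that needs care is recognizing that the unsigned maximum over all events equals the single one-sided sum over $\Omega^+$, and this is precisely what "the total mass of each distribution is $1$" provides.
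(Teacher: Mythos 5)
Your proof is correct and complete: it is the standard argument identifying the extremal event as the set where $D_1$ dominates and using that $\sum_{\omega}\bigl(\Pr_{D_1}[\omega]-\Pr_{D_2}[\omega]\bigr)=0$ to equate the two one-sided sums with half the $\ell_1$ distance. The paper states this lemma as ``well known and easy to prove'' and omits the proof entirely, so your argument fills that gap with exactly the expected approach.
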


\begin{lemma}\label{TVDX}\label{L3} Let $X: \Omega\to [0,1]$ be a random variable. Then
$$\left|\underset{D_1}{\E}[X]-\underset{D_2}{\E}[X]\right|\le \|D_1-D_2\|_{tv}.$$
\end{lemma}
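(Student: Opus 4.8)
The plan is to reduce everything to the ``maximum over events'' characterization of total variation distance established in Lemma~\ref{L2}, so that the only real work is bookkeeping with signs. The lemma is elementary, and the whole point is to avoid using the defining $\ell_1$ formula directly, since that would lose a factor of $2$.

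First I would expand the difference of expectations as a single sum over $\Omega$:
$$\E_{D_1}[X]-\E_{D_2}[X]=\sum_{\omega\in\Omega}X(\omega)\bigl(\Pr_{D_1}[\omega]-\Pr_{D_2}[\omega]\bigr).$$
Let $\Omega^{+}=\{\omega\in\Omega:\Pr_{D_1}[\omega]\ge\Pr_{D_2}[\omega]\}$ be the set of points on which $D_1$ puts at least as much mass as $D_2$. Dropping the non-positive terms indexed by $\Omega\setminus\Omega^{+}$, and then using $0\le X(\omega)\le 1$ on the remaining non-negative terms, gives
$$\E_{D_1}[X]-\E_{D_2}[X]\le\sum_{\omega\in\Omega^{+}}\bigl(\Pr_{D_1}[\omega]-\Pr_{D_2}[\omega]\bigr)=\Pr_{D_1}[\Omega^{+}]-\Pr_{D_2}[\Omega^{+}],$$
and by Lemma~\ref{L2} the right-hand side is at most $\|D_1-D_2\|_{tv}$. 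Swapping the roles of $D_1$ and $D_2$ yields $\E_{D_2}[X]-\E_{D_1}[X]\le\|D_1-D_2\|_{tv}$ in the same way, and combining the two bounds gives the claimed inequality on the absolute value.

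An alternative route, which I might include as a remark, uses the \emph{layer-cake identity} $\E_D[X]=\int_0^1\Pr_D[X>t]\,dt$, valid since $X$ takes values in $[0,1]$; then
$$\bigl|\E_{D_1}[X]-\E_{D_2}[X]\bigr|\le\int_0^1\bigl|\Pr_{D_1}[X>t]-\Pr_{D_2}[X>t]\bigr|\,dt\le\int_0^1\|D_1-D_2\|_{tv}\,dt=\|D_1-D_2\|_{tv},$$
where the middle step applies Lemma~\ref{L2} to the event $\{X>t\}$ for each fixed $t$. There is essentially no obstacle in this proof; the only point requiring a moment's care is that one must pass through the event-based characterization of Lemma~\ref{L2} rather than the defining formula, because bounding $\sum_{\omega}|X(\omega)|\cdot|\Pr_{D_1}[\omega]-\Pr_{D_2}[\omega]|$ by $\sum_{\omega}|\Pr_{D_1}[\omega]-\Pr_{D_2}[\omega]|=2\|D_1-D_2\|_{tv}$ would give only the weaker constant $2$.
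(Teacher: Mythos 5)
Your proof is correct. Note that the paper itself offers no proof of this lemma -- it is listed among the facts that are ``well known and easy to prove'' -- so there is no argument of the paper's to compare against. Both of your routes are sound: the first correctly splits the sum over $\Omega^{+}=\{\omega:\Pr_{D_1}[\omega]\ge\Pr_{D_2}[\omega]\}$, discards the non-positive terms, bounds $X(\omega)$ by $1$, and invokes the event-based characterization of Lemma~\ref{L2}; the symmetric bound then gives the absolute value. The layer-cake variant is equally valid, and your closing remark is on point -- going through the raw $\ell_1$ definition would only yield the constant $2$, so passing through Lemma~\ref{L2} (or, equivalently, through the identity $\|D_1-D_2\|_{tv}=\sum_{\omega\in\Omega^{+}}(\Pr_{D_1}[\omega]-\Pr_{D_2}[\omega])$) is exactly the right move.
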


\begin{lemma}\label{L4E} Let $W$ be an event such that $\underset{D_1}{\Pr}[\omega]=\underset{D_2}{\Pr}[\omega|W]$ for all $\omega\in  \Omega$. Then
$$\|D_1-D_2\|_{tv}= \underset{D_2}{\Pr}[\overline{W}].$$
\end{lemma}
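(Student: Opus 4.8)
The plan is to derive the equality from the characterization $\|D_1-D_2\|_{tv}=\max_{E\subseteq\Omega}|\Pr_{D_1}[E]-\Pr_{D_2}[E]|$ of Lemma~\ref{L2}, using the elementary conditioning estimates of Lemma~\ref{l0}. The first observation is that the hypothesis $\Pr_{D_1}[\omega]=\Pr_{D_2}[\omega\mid W]$ forces $D_1$ to be concentrated on $W$: for $\omega\notin W$ we have $\Pr_{D_2}[\omega\mid W]=0$, hence $\Pr_{D_1}[\omega]=0$, so $\Pr_{D_1}[W]=1$. Summing the hypothesis over $\omega\in E$ also gives the handy identity $\Pr_{D_1}[E]=\Pr_{D_2}[E\mid W]$ for every $E\subseteq\Omega$ (for the conditioning to be meaningful we of course assume $\Pr_{D_2}[W]>0$).

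For the lower bound, I would take $E=W$ in Lemma~\ref{L2}: then $\Pr_{D_1}[W]=1$ while $\Pr_{D_2}[W]=1-\Pr_{D_2}[\overline{W}]$, so $|\Pr_{D_1}[W]-\Pr_{D_2}[W]|=\Pr_{D_2}[\overline{W}]$, whence $\|D_1-D_2\|_{tv}\ge\Pr_{D_2}[\overline{W}]$. For the matching upper bound, fix an arbitrary $E\subseteq\Omega$ and apply Lemma~\ref{l0} to the distribution $D_2$ with $A=E$ and $B=W$; this gives $\Pr_{D_2}[E\mid W]-\Pr_{D_2}[\overline{W}]\le\Pr_{D_2}[E]\le\Pr_{D_2}[E\mid W]+\Pr_{D_2}[\overline{W}]$. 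Since $\Pr_{D_1}[E]=\Pr_{D_2}[E\mid W]$, this is precisely $|\Pr_{D_1}[E]-\Pr_{D_2}[E]|\le\Pr_{D_2}[\overline{W}]$; maximizing over $E$ and invoking Lemma~\ref{L2} again yields $\|D_1-D_2\|_{tv}\le\Pr_{D_2}[\overline{W}]$. The two bounds combine to the claimed equality.

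I do not expect a genuine obstacle here; the only points needing a moment's care are that $\Pr_{D_2}[W]>0$ so conditioning on $W$ makes sense, and the verification that $\Pr_{D_1}$ really is a probability distribution supported on $W$ before one reads off $\Pr_{D_1}[W]=1$. As an alternative that bypasses Lemma~\ref{l0}, one can compute straight from the definition of total variation distance: split $\sum_{\omega}|\Pr_{D_1}[\omega]-\Pr_{D_2}[\omega]|$ according to whether $\omega\in W$; on $\overline{W}$ each term equals $\Pr_{D_2}[\omega]$ and the block contributes $\Pr_{D_2}[\overline{W}]$; on $W$ one has $\Pr_{D_1}[\omega]=\Pr_{D_2}[\omega]/\Pr_{D_2}[W]\ge\Pr_{D_2}[\omega]$, so the absolute values drop and that block sums to $\Pr_{D_1}[W]-\Pr_{D_2}[W]=1-\Pr_{D_2}[W]=\Pr_{D_2}[\overline{W}]$. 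Halving the total $2\Pr_{D_2}[\overline{W}]$ gives $\Pr_{D_2}[\overline{W}]$, as desired.
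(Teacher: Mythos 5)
Your proof is correct and follows essentially the same route as the paper: establish $\Pr_{D_1}[W]=1$, take $E=W$ in Lemma~\ref{L2} for the lower bound, and apply Lemma~\ref{l0} together with $\Pr_{D_1}[E]=\Pr_{D_2}[E\mid W]$ for the upper bound. The direct computation you sketch as an alternative is also valid, but the main argument matches the paper's.
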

\begin{proof} First, we have $\Pr_{D_1}[W]={\Pr}_{D_2}[W|W]=1$. Now
\begin{eqnarray*}
\|D_1-D_2\|_{tv}=\max_{E\subseteq \Omega}|\underset{D_1}{\Pr}[E]-\underset{D_2}{\Pr}[E]|
\overset{E=W}{\ge} \underset{D_2}{\Pr}[\overline{W}]
\end{eqnarray*}
and by Lemma~\ref{l0}, for any $E$,
$$|\underset{D_1}{\Pr}[E]-\underset{D_2}{\Pr}[E]|=|\underset{D_2}{\Pr}[E|W]-\underset{D_2}{\Pr}[E]|\le \underset{D_2}{\Pr}[\overline{W}].$$
\end{proof}

\begin{lemma}\label{L4} Let $W$ be an event such that $\underset{D_1}{\Pr}[\omega|W]=\underset{D_2}{\Pr}[\omega|W]$ for all $\omega\in  \Omega$. Then
$$\|D_1-D_2\|_{tv}\le \underset{D_1}{\Pr}[\overline{W}]+\underset{D_2}{\Pr}[\overline{W}].$$
\end{lemma}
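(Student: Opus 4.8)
The hypothesis is that $D_1$ and $D_2$ agree when conditioned on $W$; the conclusion bounds the total variation distance by the total mass the two distributions place outside $W$. The plan is to use the characterization from Lemma~\ref{L2}, namely $\|D_1-D_2\|_{tv}=\max_{E\subseteq\Omega}|\Pr_{D_1}[E]-\Pr_{D_2}[E]|$, and to bound $|\Pr_{D_1}[E]-\Pr_{D_2}[E]|$ for an arbitrary fixed event $E$.

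The key step is a two-sided application of Lemma~\ref{l0}. For any event $E$ (and with $W\neq\emptyset$), Lemma~\ref{l0} applied under $D_1$ gives $\Pr_{D_1}[E|W]-\Pr_{D_1}[\overline W]\le \Pr_{D_1}[E]\le \Pr_{D_1}[E|W]+\Pr_{D_1}[\overline W]$, and the analogous chain holds under $D_2$. Since $\Pr_{D_1}[E|W]=\Pr_{D_2}[E|W]$ by hypothesis, subtracting the two chains of inequalities yields $|\Pr_{D_1}[E]-\Pr_{D_2}[E]|\le \Pr_{D_1}[\overline W]+\Pr_{D_2}[\overline W]$. Taking the maximum over all $E\subseteq\Omega$ and invoking Lemma~\ref{L2} gives the claim. (One should note the edge case $W=\emptyset$ is excluded by hypothesis, so the conditional probabilities are well defined; if $E$ or $\overline E$ is empty the bound is trivial.)

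I do not anticipate any real obstacle here — the argument is a routine combination of Lemma~\ref{l0} and Lemma~\ref{L2}, exactly parallel to the proof of Lemma~\ref{L4E} but with the conditioning applied symmetrically to both distributions rather than to just one. The only point requiring a moment of care is making sure the two one-sided bounds from Lemma~\ref{l0} are combined in the correct direction (upper bound on $\Pr_{D_1}[E]$ against lower bound on $\Pr_{D_2}[E]$, and vice versa) so that the common term $\Pr_{D_1}[E|W]=\Pr_{D_2}[E|W]$ cancels cleanly.
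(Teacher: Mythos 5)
Your proof is correct, but it takes a different route from the paper. The paper introduces the auxiliary distribution $D_3$ defined as the conditional distribution of $D_1$ given $W$, observes that $D_3$ is also the conditional distribution of $D_2$ given $W$, applies Lemma~\ref{L4E} to get $\|D_1-D_3\|_{tv}=\Pr_{D_1}[\overline W]$ and $\|D_2-D_3\|_{tv}=\Pr_{D_2}[\overline W]$, and concludes by the triangle inequality for total variation distance. You instead argue event-wise: for an arbitrary $E$ you apply Lemma~\ref{l0} under each of $D_1$ and $D_2$, use the hypothesis (summed over $\omega\in E$) to cancel the common term $\Pr_{D_1}[E|W]=\Pr_{D_2}[E|W]$, and then take the maximum over $E$ via Lemma~\ref{L2}. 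Both are valid; your version is essentially an unrolling of the paper's, since Lemma~\ref{L4E} is itself proved from Lemmas~\ref{l0} and~\ref{L2}. What the paper's route buys is brevity and reuse of Lemma~\ref{L4E}, at the cost of invoking the (unstated but standard) triangle inequality for $\|\cdot\|_{tv}$; your route is more self-contained and makes the cancellation explicit, at the cost of a slightly longer chain of inequalities. One small point to make explicit if you write this up: the hypothesis is stated per outcome $\omega$, so you should note that $\Pr_{D_1}[E|W]=\Pr_{D_2}[E|W]$ follows by summing over $\omega\in E$, and that the conditionals being well defined requires $W$ to have positive probability under both distributions.
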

\begin{proof} Let $D_3$ be the conditional distribution of $D_1$ given $W$. Then $\Pr_{D_2}[\omega|W]=\Pr_{D_1}[\omega|W]=\Pr_{D_3}[\omega]$. By Lemma~\ref{L4E}, $\|D_1-D_3\|_{tv}= \Pr_{D_1}[\overline{W}]$ and $\|D_2-D_3\|_{tv}=\Pr_{D_2}[\overline{W}]$ and therefore
$$\|D_1-D_2\|_{tv}\le \|D_1-D_3\|_{tv}+\|D_2-D_3\|_{tv}=\underset{D_1}{\Pr}[\overline{W}]+\underset{D_2}{\Pr}[\overline{W}].$$
\end{proof}

\begin{lemma}\label{L5} Let $D_1$ and $D_2$ be two probability distributions over $\Omega_1\times \Omega_2$.
If for every $\omega\in \Omega_1$, $\Pr_{(\oomega_1,\oomega_2)\gets D_1}[\oomega_1=\omega]=\Pr_{(\oomega_1,\oomega_2)\gets D_2}[\oomega_1=\omega]$ then the total variation distance between the distributions $D_1$ and $D_2$ is less than or equal to the maximum over $w_1\in \Omega_1$ of the total variation distance between the distributions of $\oomega_2$ conditioning on $\oomega_1=\omega_1$ in $D_1$ and $D_2$.
\end{lemma}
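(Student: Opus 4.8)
The plan is to reduce the statement to Lemma~\ref{L2} by writing the total variation distance as a sum over $\Omega_1$ of "conditional" contributions, each of which is controlled by the total variation distance of the conditional distributions of $\oomega_2$. First I would fix notation: for $i\in\{1,2\}$ and $\omega_1\in\Omega_1$ let $p(\omega_1):=\Pr_{D_i}[\oomega_1=\omega_1]$ — this is the common value by hypothesis — and let $D_i(\cdot\mid\omega_1)$ denote the distribution of $\oomega_2$ under $D_i$ conditioned on $\oomega_1=\omega_1$ (defined arbitrarily, say, when $p(\omega_1)=0$, since such $\omega_1$ contribute nothing). Then for any $\omega_1$ with $p(\omega_1)>0$ and any $\omega_2\in\Omega_2$ we have $\Pr_{D_i}[(\omega_1,\omega_2)]=p(\omega_1)\cdot\Pr_{D_i(\cdot\mid\omega_1)}[\omega_2]$.

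The key computation is then a direct manipulation of the definition of total variation distance:
\begin{eqnarray*}
\|D_1-D_2\|_{tv}&=&\frac12\sum_{\omega_1\in\Omega_1}\sum_{\omega_2\in\Omega_2}\left|\Pr_{D_1}[(\omega_1,\omega_2)]-\Pr_{D_2}[(\omega_1,\omega_2)]\right|\\
&=&\frac12\sum_{\omega_1\in\Omega_1}p(\omega_1)\sum_{\omega_2\in\Omega_2}\left|\Pr_{D_1(\cdot\mid\omega_1)}[\omega_2]-\Pr_{D_2(\cdot\mid\omega_1)}[\omega_2]\right|\\
&=&\sum_{\omega_1\in\Omega_1}p(\omega_1)\cdot\|D_1(\cdot\mid\omega_1)-D_2(\cdot\mid\omega_1)\|_{tv}.
\end{eqnarray*}
The first equality is the definition (with $\Omega=\Omega_1\times\Omega_2$); the second pulls out the common factor $p(\omega_1)$ using the factorization above (and the terms with $p(\omega_1)=0$ vanish identically, so the arbitrary choice there is harmless); the third is again the definition of total variation distance applied to the conditional distributions on $\Omega_2$.

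Finally I would bound each conditional distance by its maximum and use that $p$ is a probability distribution on $\Omega_1$:
$$\|D_1-D_2\|_{tv}=\sum_{\omega_1\in\Omega_1}p(\omega_1)\cdot\|D_1(\cdot\mid\omega_1)-D_2(\cdot\mid\omega_1)\|_{tv}\le \max_{\omega_1\in\Omega_1}\|D_1(\cdot\mid\omega_1)-D_2(\cdot\mid\omega_1)\|_{tv},$$
which is exactly the claimed inequality. I do not anticipate a real obstacle here; the only point requiring a little care is the handling of $\omega_1$ with $p(\omega_1)=0$ (where the conditional distributions are undefined), and the cleanest fix is simply to note those terms drop out of the sum, so the identity and hence the bound hold regardless of how the conditionals are defined there. (If one prefers, one can also restrict the maximum on the right-hand side to $\omega_1$ in the common support, which only strengthens the statement.)
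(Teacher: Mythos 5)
Your proof is correct, and it is the standard argument: the paper itself leaves Lemma~\ref{L5} unproven (it is listed among the lemmas described as ``well known and easy to prove''), and the decomposition $\|D_1-D_2\|_{tv}=\sum_{\omega_1}p(\omega_1)\,\|D_1(\cdot\mid\omega_1)-D_2(\cdot\mid\omega_1)\|_{tv}\le\max_{\omega_1}\|D_1(\cdot\mid\omega_1)-D_2(\cdot\mid\omega_1)\|_{tv}$ is exactly the intended one. Your handling of the $p(\omega_1)=0$ terms is also the right way to make the identity airtight.
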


\subsection{Lower Bound Technique}
Our goal is to show that there exists no $q$-query non-adaptive (randomized)
algorithm that distribution-free $(1/3)$-tests $C$.

We can think of a distribution-free $(1/3)$-tester for $C$ as a randomized algorithm $T$ that receives as an input a pair $(\phi,\D)$ where $\phi : \{0, 1\}^n \to \{0, 1\}$ and $\D$ is a probability distribution over $\{0,1\}^n$. If $\phi\in C$ then $T$ accepts with probability at least $2/3$ and if $f$ is $(1/3)$-far from every function in $C$ with respect to $\D$ then it rejects with probability at least $2/3$.

The (folklore) technique introduced here shows that it is enough to focus on $q$-query non-adaptive {\bf deterministic} algorithms. Such
an algorithm $A$ consists of two deterministic maps $A_1$ and $A_2$ works as
follows. Upon an input pair $(\phi,\D)$, where $\phi : \{0, 1\}^n \to \{0, 1\}$ and $\D$ is a probability distribution
over $\{0, 1\}^n$, the algorithm receives in the first phase a sequence $Y = (y^{(i)} : i \in [q])$ of $q$ strings
(which should be thought of as samples from $\D$) and a binary string $\alpha = \phi(Y )$ of length~$q$. In the
second phase, the algorithm $A$ uses the first map $A_1$ to obtain a sequence of $q$ strings $Z = (z^{(i)} : i \in
[q]) = A_1 (Y, \alpha)$ and feeds them to the black-box oracle. Once the query results $\beta = \phi(Z)$ are back,
$A_2 (Y, \alpha, \beta)$ returns either $0$ or $1$ in which cases the algorithm $A$ either rejects or accepts, respectively. Notice that we do not need to include $Z$ as an input of $A_2$, since it
is determined by $Y$ and $\alpha$. A
randomized algorithm $T$ works similarly and consists of two similar maps $T_1$ and $T_2$ but both are
randomized. The following are the two algorithms $A$ and $T$. The (infinite length) strings $s_1$ and $s_2$ are two random seeds
\\

\noindent\fbox{%
    \parbox{\textwidth}{%
\begin{multicols}{2}
  \begin{minipage}{0.45\textwidth}
    {\bf Deterministic Algorithm $A$}
    \begin{enumerate}
    \setlength\itemsep{.1em}
    \item Input $(\phi,\D)$
    \item Get $Y= (y^{(i)} : i \in [q])$
    \item $\alpha=\phi(Y)$
    \item $Z=(z^{(i)}:i\in [q])=A_1(Y,\alpha)$
    \item $\beta=\phi(Z)$
    \item Output $A_2(Y,\alpha,\beta)$
    \end{enumerate}
  \end{minipage}
\begin{minipage}{0.45\textwidth}
    {\bf Randomized Algorithm $T$}
    \begin{enumerate}
   \setlength\itemsep{.1em}
    \item Input $(\phi,\D)$
    \item Get $Y= (y^{(i)} : i \in [q])$
    \item $\alpha=\phi(Y)$
    \item $Z=(z^{(i)}:i\in [q])=T_1(Y,\alpha,s_1)$
    \item $\beta=\phi(Z)$
    \item Output $T_2(Y,\alpha,\beta,s_2)$
    \end{enumerate}
  \end{minipage}
\end{multicols}
}}
\\

Given the above deterministic algorithm, unlike typical deterministic algorithms, whether $A$ accepts or not
depends on not only $(\phi,\D)$ but also the sample strings $\YY\gets \D^q$ it draws. Formally, we have
\begin{eqnarray*}\Pr[A \mbox{\ accepts\ }(\phi,\D)]&=&\underset{\YY\gets \D^q}{\Pr}[A \mbox{\ accepts\ }(\phi,\D)]\\
&=& \underset{\YY\gets \D^q}{\Pr}[A_2(\YY,\phi(\YY),\phi(A_1(\YY,\phi(\YY))))=1].
\end{eqnarray*}
For the randomized algorithm $T$ we have
\begin{eqnarray*}\Pr[T \mbox{\ accepts\ }(\phi,\D)]&=& \underset{s_1,s_2,\YY\gets \D^q}{\Pr}[T \mbox{\ accepts\ }(\phi,\D)]\\&=&
\underset{s_1,s_2,\YY\gets \D^q}{\Pr}[T_2(\YY,\phi(\YY),\phi(T_1(\YY,\phi(\YY),s_1)),s_2)=1].\end{eqnarray*}

We now prove
\begin{lemma} \cite{LiuCSSX18}
Let $\YES$ and $\NO$ be probability
distributions over pairs $(\phi,\D)$, where $\phi:\{0,1\}^n\to \{0,1\}$ is a Boolean function over $n$ variables and
$\D$ is a distribution over $\{0, 1\}^n$. For clarity, we use $( f ,\D)$ to denote pairs in the support of $\YES$ and
$(g,\D)$ to denote pairs in the support of $\NO$. Suppose $\YES$ and $\NO$ satisfy
\begin{description}
\item[C1:] Every $( f ,\D)$ in the support of $\YES$
satisfies that $f$ is in $C$.
\item[C2:] With probability at least $12/13$, $(\gd,\DD) \gets \NO$ satisfies
that $\gd$ is $(1/3)$-far from every function in $C$ with respect to $\DD$.
\item[C3:] Any $q$-query
non-adaptive {\it deterministic} algorithm must behave similarly when it is run on $(\ff,\DD) \gets\YES$
versus $(\gd,\DD) \gets \NO$: That is,
any $q$-query deterministic algorithm $A$ satisfies
$$\left| \underset{(\ff,\DD)\gets \YES}{\E}[\Pr[A\mbox{\ accepts\ }(\ff,\DD)]]-
\underset{(\gd,\DD)\gets \NO}{\E}[\Pr[A\mbox{\ accepts\ }(\gd,\DD)]]\right|\le \frac{1}{4}.$$
\end{description}
Then any non-adaptive (randomized) algorithm $T$ that distribution-free $(1/3)$-tests $C$ must have query complexity
at least $q$.
\end{lemma}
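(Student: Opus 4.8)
\emph{Proof plan.} The approach is the standard averaging (Yao-type) argument: realize the randomized tester as a mixture of deterministic $q$-query algorithms, then combine the testing guarantee with \textbf{C1}, \textbf{C2} to get a YES/NO acceptance gap exceeding $1/4$, contradicting \textbf{C3}. For a (possibly randomized) $q$-query algorithm $B$ write
$$p_{\YES}(B)=\underset{(\ff,\DD)\gets\YES}{\E}\big[\Pr[B\accepts(\ff,\DD)]\big],\qquad p_{\NO}(B)=\underset{(\gd,\DD)\gets\NO}{\E}\big[\Pr[B\accepts(\gd,\DD)]\big].$$
Suppose toward a contradiction that some non-adaptive randomized algorithm $T$ with query complexity less than $q$ distribution-free $(1/3)$-tests $C$; after padding with unused sample and black-box queries we may assume $T$ is a $q$-query algorithm with random seeds $s_1,s_2$ exactly as in the framework above.

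\emph{Step 1 (decomposition).} For each fixed pair $(s_1,s_2)$ the maps $T_1(\cdot,\cdot,s_1)$ and $T_2(\cdot,\cdot,\cdot,s_2)$ become deterministic, so they are the maps $A_1,A_2$ of a deterministic $q$-query algorithm $A_{s_1,s_2}$, and for every pair $(\phi,\D)$ we have $\Pr[T\accepts(\phi,\D)]=\E_{s_1,s_2}\big[\Pr[A_{s_1,s_2}\accepts(\phi,\D)]\big]$. Taking the expectation over $(\ff,\DD)\gets\YES$ and over $(\gd,\DD)\gets\NO$ and interchanging it with the expectation over $(s_1,s_2)$ (linearity of expectation) gives $p_{\YES}(T)=\E_{s_1,s_2}[p_{\YES}(A_{s_1,s_2})]$ and $p_{\NO}(T)=\E_{s_1,s_2}[p_{\NO}(A_{s_1,s_2})]$.

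\emph{Step 2 (the gap for $T$).} By \textbf{C1} every $(\ff,\DD)$ in the support of $\YES$ has $\ff\in C$, so the testing guarantee gives $\Pr[T\accepts(\ff,\DD)]\ge 2/3$ and hence $p_{\YES}(T)\ge 2/3$. By \textbf{C2}, with probability at least $12/13$ a draw $(\gd,\DD)\gets\NO$ has $\gd$ that is $(1/3)$-far from every function in $C$ with respect to $\DD$, in which case $\Pr[T\accepts(\gd,\DD)]\le 1/3$; bounding the acceptance probability by $1$ on the remaining $1/13$ of the mass yields $p_{\NO}(T)\le \tfrac{12}{13}\cdot\tfrac13+\tfrac1{13}=\tfrac5{13}$. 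Therefore $p_{\YES}(T)-p_{\NO}(T)\ge \tfrac23-\tfrac5{13}=\tfrac{11}{39}>\tfrac14$.

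\emph{Step 3 (contradiction with \textbf{C3}).} Combining Step 1 with the triangle inequality, $|p_{\YES}(T)-p_{\NO}(T)|=\big|\E_{s_1,s_2}[p_{\YES}(A_{s_1,s_2})-p_{\NO}(A_{s_1,s_2})]\big|\le \E_{s_1,s_2}\big[|p_{\YES}(A_{s_1,s_2})-p_{\NO}(A_{s_1,s_2})|\big]\le 1/4$, since \textbf{C3} applies to each deterministic $q$-query algorithm $A_{s_1,s_2}$. This contradicts Step 2, so no such $T$ exists and every non-adaptive randomized tester for $C$ has query complexity at least $q$. I expect no serious obstacle: the argument is routine. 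The part most prone to slips is Step 1 — verifying that the frozen-seed algorithms $A_{s_1,s_2}$ are legitimate deterministic $q$-query algorithms in the sense of \textbf{C3}, and that the expectation over $(s_1,s_2)$ genuinely commutes with the $\YES$/$\NO$ expectations — together with the harmless padding that lets us treat a sub-$q$-query tester as a $q$-query algorithm.
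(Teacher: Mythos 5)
Your proposal is correct and follows essentially the same argument as the paper: freeze the seeds $(s_1,s_2)$ to decompose $T$ into deterministic $q$-query algorithms, use \textbf{C1} and \textbf{C2} to force a YES/NO acceptance gap exceeding $1/4$, and contradict \textbf{C3} by averaging over seeds (the paper extracts a specific seed pair $(s_1',s_2')$ rather than invoking the triangle inequality, but this is the same averaging step). Your bound $p_{\NO}(T)\le 5/13$ is slightly tighter than the paper's $1/3+1/13<5/12$, which changes nothing.
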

\begin{proof} Assume for a contradiction that
there exists a $q$-query non-adaptive randomized algorithm~$T_{s_1,s_2}$ that distribution-free $(1/3)$-tests
$C$ where $s_1$ and $s_2$ are the random seeds of the algorithm. Then, by {\bf C1}, for every $(f,\D)$ in the support of $\YES$ we have
$\Pr[T_{s_1,s_2}\mbox{\ accepts\ }(f,\D)]\ge 2/3$. Therefore,
\begin{eqnarray}
\underset{(\ff,\DD)\gets \YES}{\E}[\Pr [T_{s_1,s_2} \mbox{\ accepts\ }(\ff,\DD)]]\ge \frac{2}{3}.\label{pq1}
\end{eqnarray}
Define
$U:=[\gd$ is $(1/3)$-far from every function in $C$ with respect to $\DD]$ and $\WW:=\Pr[T_{s_1,s_2}\mbox{\ accepts\ }(\gg,\DD)]$. Then, by {\bf C2},
\begin{eqnarray}
\underset{(\gd,\DD)\gets \NO}{\E}[\WW]&=&\underset{(\gd,\DD)\gets \NO}{\E}[\WW|U]\underset{(\gd,\DD)\gets \NO}{\Pr}[U]+\underset{(\gd,\DD)\gets \NO}{\E}[\WW|\overline{U}]\underset{(\gd,\DD)\gets \NO}{\Pr}[\overline{U}]\nonumber\\
&\le&\underset{(\gd,\DD)\gets \NO}{\E}[\WW|U]+\underset{(\gd,\DD)\gets \NO}{\Pr}[\overline{U}]\nonumber\\
&\le& \frac{1}{3}+\frac{1}{13}<\frac{5}{12}.\label{pq2}
\end{eqnarray}
By (\ref{pq1}) and (\ref{pq2}) we have that
$$ \underset{(\ff,\DD)\gets \YES}{\E}[\Pr[T_{s_1,s_2}\mbox{\ accepts\ }(\ff,\DD)]]-
\underset{(\gd,\DD)\gets \NO}{\E}[\Pr[T_{s_1,s_2}\mbox{\ accepts\ }(\gd,\DD)]]> \frac{2}{3}-\frac{5}{12}=\frac{1}{4}.$$
Since,
\begin{eqnarray*}\Pr[T_{s_1,s_2} \mbox{\ accepts\ }(\phi,\D)]&=& \underset{s_1,s_2,\YY\gets \D^q}{\Pr}[T_{s_1,s_2} \mbox{\ accepts\ }(\phi,\D)]\\&=&
\underset{s_1,s_2}{\E}\left[\underset{\YY\gets \D^q}{\Pr}[T_{s_1,s_2} \mbox{\ accepts\ }(\phi,\D)]\right]
\end{eqnarray*}
we have
$$ \underset{s_1,s_2}{\E}\left[\underset{(\ff,\DD)\gets \YES}{\E}\left[\underset{\YY\gets\DD^q}{\Pr}[T_{s_1,s_2}\mbox{\ accepts\ }(\ff,\DD)]\right]-
\underset{(\gd,\DD)\gets \NO}{\E}\left[\underset{\YY\gets\DD^q}{\Pr}\left[T_{s_1,s_2}\mbox{\ accepts\ }(\gd,\DD)\right]\right]\right]> \frac{1}{4}.$$
Thus, there exist $s_1'$ and $s_2'$, and therefore a $q$-query nonadaptive
deterministic algorithm $A=T_{s_1',s_2'}$, that satisfies
$$\left| \underset{(\ff,\DD)\gets \YES}{\E}[\Pr[A\mbox{\ accepts\ }(\ff,\DD)]]-
\underset{(\gd,\DD)\gets \NO}{\E}[\Pr[A\mbox{\ accepts\ }(\gd,\DD)]]\right|> \frac{1}{4}.$$
A contradiction to {\bf C3}.
\end{proof}

\section{The $\YES$ and $\NO$ Distributions}
Given $J \subseteq [n]$, we partition $\{0, 1\}^n$ into sections (with respect to $J$) where the $z$-{\it section}, $z \in \{0, 1\}^J$,
consists of those $x \in \{0, 1\}^n$ that have $x_J = z$. We write $\JUNTA
_J$ to denote the uniform distribution
over all juntas over $J$ . More precisely, a Boolean function $\hh : \{0, 1\}^n \to \{0, 1\}$ drawn
from $\JUNTA_J$ is generated as follows: For each $z \in \{0, 1\}^J$, a bit $\bb(z)$ is chosen independently
and uniformly at random, and for each $x \in \{0, 1\}^n$ the value of $\hh(x)$ is set to $\bb(x_J )$. That is, if $x$ is in the $z$-section then $f(x)=\bb(z)$.

We now define two probability distributions:
Let $$m={18\ln|C|}.$$

\noindent\fbox{%
    \parbox{\textwidth}{%
\noindent
\centerline{\underline{\bf The probability distribution $\YES$}}

A pair $(\ff,\DD)$ drawn from $\YES$ is generated as follows:
\begin{enumerate}
\item Draw a subset $\JJ$ of $[n]$ of size $k$ uniformly at random
\item Draw a subset
$\SS$ of $\{0, 1\}^n$ of size $m$ uniformly at random.
\item Draw $\ff \gets \JUNTA_\JJ$
\item Set $\DD$ to be the uniform distribution over $\SS$.
\end{enumerate}
}}\\ \\

\noindent
\fbox{%
    \parbox{\textwidth}{%
\noindent
\centerline{\underline{\bf The probability distribution $\NO$}}

A pair $(\gd,\DD)$ drawn from $\NO$ is generated as follows:
\begin{enumerate}
\item Draw a subset $\JJ$ of $[n]$ of size $k$ uniformly at random
\item Draw a subset $\SS$ of $\{0, 1\}^n$ of size $m$ uniformly at random.
\item Draw $\hh\gets  \JUNTA_\JJ$. We usually refer to
$\hh$ as the “background junta.”
\item Draw a map $\ggamma : \SS \to \{0, 1\}$ uniformly at random by choosing
a bit independently and uniformly at random for each string in $\SS$.
\item The distribution $\DD$ is set to be the uniform distribution over $\SS$, which is the same as $\YES$.
\item The function $\gd : \{0, 1\}^n \to \{0, 1\}$ is defined using $\hh, \SS$ and $\ggamma$ as follows:
$$\gd(x)=\left\{
\begin{array}{ll}
\ggamma(x)&  x\in \SS \\
\hh(x)& x\not\in \SS, (\forall y\in \SS)\ x_\JJ\not=y_\JJ \mbox{\ or\ } d(x,y)>(0.5-\lambda)n\\
\ggamma(y)& x\not\in \SS, (\exists y\in \SS)\ x_\JJ=y_\JJ \mbox{\ and\ } d(x,y)\le (0.5-\lambda)n\ \ \  (*)
\end{array}
\right.
$$
{(*) The choice of the tie-breaking
rule here is not important; we can, for example, order the elements of $S$ in a lexicographic order $(s^{(i)}:i\in [m])$ and define $g(x)=\ggamma(s^{(i)})$ for the smallest $i$ that satisfies $x_\JJ=s^{(i)}_\JJ \mbox{\ and\ } d(x,s^{(i)})\le (0.5-\lambda)n~$. This makes $\gd$ well defined.}
\end{enumerate}
}} \\

For technical reasons that will become clear in the sequel we use $\YES^*$
to denote the probability
distribution supported over triples $( f ,\D, J)$, with $(\ff,\DD, \JJ) \gets \YES^*$
being generated by the
same steps above. So, the only difference is that we include $\JJ$ in elements of $\YES^*$. Similarly, we let $\NO^*$
denote the distribution supported on triples $(g,\D, J)$ as generated above.

To understand the intuition behind the above definitions, read subsubsection {\bf The lower bound} in subsection 1.2 and the last paragraph in page 1:17 in~\cite{LiuCSSX18} (when $C$ is the class of all $k$-juntas).

\section{The Proofs of C1 and C2}
In this section we prove
\begin{description}
\item[C1:] Every $( f ,\D)$ in the support of $\YES$
satisfies that $f$ is in $C$.
\item[C2:] With probability at least $12/13$, $(\gd,\DD) \gets \NO$ satisfies
that $\gd$ is $(1/3)$-far from every function in $C$ with respect to $\DD$.
\end{description}

\noindent {\bf Proof of C1}: By the definition of $\YES$ we have that $f$ is $k$-junta. Since $C$ contains all the $k$-juntas we have that $f$ is in $C$. \qed

\noindent {\bf Proof of C2}: Let $\beta\in C$. Since $\DD$ is the uniform distribution over $\SS$, we have that $dist_\DD(\gd,\beta)$ is equal to the fraction of strings $z \in \SS$ such that $\ggamma (z) \not= \beta(z)$.
By the union bound, we have
\begin{eqnarray}
\underset{(\gd,\DD)\gets \NO}{\Pr}\left[\dist(\gd,C)<1/3\right]&=&
\underset{(\gd,\DD)\gets \NO}{\Pr}\left[(\exists \beta\in C)\ \dist(\gd,\beta)<1/3\right]\nonumber\\
&=&
\underset{(\gd,\DD)\gets \NO}{\Pr}\left[(\exists \beta\in C)\ \underset{\zz\gets \DD}{\Pr}[\gd(\zz)\not=\beta(\zz)]<1/3\right]\nonumber\\
&\le&
|C|\cdot\max_{\beta\in C}\underset{(\gd,\DD)\gets \NO}{\Pr}\left[\underset{\zz\gets \DD}{\Pr}[\gd(\zz)\not=\beta(\zz)]<1/3\right].\label{mm1}
\end{eqnarray}

Now let $\bone_{\gd\not=\beta}(\zz)$ be the indicator random variable of $\gd(\zz)\not=\beta(\zz)$, i.e, $\bone_{\gd\not=\beta}(\zz)=1$ if $\gd(\zz)\not=\beta(\zz)$ and zero otherwise. Since each bit $\ggamma (\zz)$, $\zz\in \SS$, is drawn independently and uniformly at random, we have that, for every $\zz\in \SS$,
$$\underset{(\gd,\DD)\gets \NO}{\E}[\bone_{\ggamma\not=\beta}(\zz)]=\frac{1}{2}.$$
Then, by Chernoff bound (\ref{Chernoff2}) in Lemma~\ref{Chernoff} ($m= 18\ln |C|$, $k\ge 10$, $C$ contains all $k$-Junta functions and therefore $|C|\ge 2^{2^k}>13$),
\begin{eqnarray*}
\underset{(\gd,\DD)\gets \NO}{\Pr}\left[ \underset{\zz\gets\DD}{\Pr}[\gd(\zz)\not=\beta(\zz)]<\frac{1}{3} \right]&=&
\underset{(\gd,\DD)\gets \NO}{\Pr}\left[ \sum_{\zz\in \SS}\bone_{\gd\not=\beta}(\zz)<\frac{1}{3} m\right]\\
&\le& e^{-m/9}\le \frac{1}{13|C|}.
\end{eqnarray*}
Therefore
\begin{eqnarray}
|C|\cdot\max_{\beta\in C}\underset{(\gd,\DD)\gets \NO}{\Pr}\left[ \underset{\zz\gets \DD}{\Pr}[\gd(\zz)\not=\beta(\zz)]<\frac{1}{3} \right]\le \frac{1}{13}.\label{mm3}
\end{eqnarray}
By (\ref{mm1}) and (\ref{mm3}) we get
\begin{eqnarray*}
\underset{(\gd,\DD)\gets \NO}{\Pr}\left[\dist(\gd,C)<1/3\right]\le \frac{1}{13}.\qed
\end{eqnarray*}

\section{The Proof of C3}
In this section we prove
\begin{description}
\item[C3:] Any $q$-query
non-adaptive {\it deterministic} algorithm must behave similarly when it is run on $(\ff,\DD) \gets\YES$
versus $(\gd,\DD) \gets \NO$: That is,
any $q$-query deterministic algorithm $A$ satisfies
$$\left| \underset{(\ff,\DD)\gets \YES}{\E}[\Pr[A\mbox{\ accepts\ }(\ff,\DD)]]-
\underset{(\gd,\DD)\gets \NO}{\E}[\Pr[A\mbox{\ accepts\ }(\gd,\DD)]]\right|\le \frac{1}{4}.$$
\end{description}

Let $A$ be a $q$-query non-adaptive deterministic algorithm where
$$q= \frac{1}{8(1+2\lambda)^{k/2}}\cdot {2^{k/2}},$$ and
$$\lambda=\sqrt{\frac{5+\ln\ln |C|+k/2}{n}}.$$
We will use the following definition. Let $Y = (y_i : i \in [q])$ be a sequence of $q$ strings in $\{0, 1\}^n$, $\alpha$ be a $q$-bit string, and $J \subset [n]$ be a set of size $k$. We say that $(Y, \alpha, J )$ is consistent if
$$\alpha_i = \alpha_j \mbox{\ for all\ } i, j \in [q]\mbox{\ with\ }y^{(i)}_J = y^{(j)}_J.$$
Given a consistent triple $(Y, \alpha, J )$, we write $\JUNTA_{Y,\alpha, J}$ to denote the uniform distribution over all juntas $h$ over $J$ that are consistent with $(Y, \alpha)$. More precisely, a draw of $\hh \gets \JUNTA_{Y,\alpha, J}$ is generated as follows: For each $z \in\{0, 1\}^J$, if there exists a $y^{(i)}$ such that $y^{(i)}_J = z$, then $\hh(x)$ is set to $\alpha_i$ for all $x \in \{0, 1\}^n$ with $x_J = z$; if no such $y^{(i)}$ exists, then a uniform random bit
$\bb(z)$ is chosen independently and $\hh(x)$ is set to $\bb(z)$ for all $x$ with $x_J = z$.

To prove {\bf C3}, we first derive from $A$ the following randomized algorithm $A'$ that works on triples
$(\phi,D,J)$ from the support of either $\YES^*$
or $\NO^*$. Again for clarity we use $\phi$ to denote a function
from the support of $\YES/\YES^*$
or $\NO/\NO^*$, $f$ to denote a function from $\YES/\YES^*$
and $g$ to
denote a function from $\NO/\NO^*$.\\

\noindent
\fbox{%
    \parbox{\textwidth}{%
\begin{multicols}{2}
  \begin{minipage}{0.45\textwidth}
    {\bf Deterministic Algorithm $A$}
    \begin{enumerate}
    \setlength\itemsep{.1em}
    \item Input $(\phi,\D)$
    \item $\YY\gets \D^q$
    \item $\aalpha=\phi(\YY)$\\
    \item $\ZZ=A_1(\YY,\aalpha)$
    \item \ \\ $\bbeta=\phi(\ZZ)$
    \item Output $A_2(\YY,\aalpha,\bbeta)$
    \end{enumerate}
  \end{minipage}
\begin{minipage}{0.45\textwidth}
    {\bf Randomized Algorithm $A'$}
    \begin{enumerate}
   \setlength\itemsep{.1em}
    \item Input $(\phi,\D,J)$
    \item $\YY\gets \D^q$;
    \item $\aalpha=\phi(\YY)$\\ If $(\YY,\aalpha,J)$ is not consistent reject
    \item $\ZZ=A_1(\YY,\aalpha)$
    \item Draw $\hh'\gets\JUNTA_{\YY,\aalpha,J}$; \\ $\bbeta=\hh'(\ZZ)$
    \item Output $A_2(\YY,\aalpha,\bbeta)$
    \end{enumerate}
  \end{minipage}
\end{multicols}}} \\

From the description of $A'$ above, we have
$$\Pr[A'\mbox{\ accepts\ }(\phi,\D,J)]=\underset{\YY,\hh'}{\Pr}[(\YY,\aalpha,J)\mbox{\ is consistent and\ } A_2(\YY,\aalpha,\hh'(\ZZ))=1].$$

To prove {\bf C3} we will prove the following
\begin{description}
\item[C3.1] $A'$ behaves similarly on $\YES^*$ and $\NO^*$, i.e,
$$\left|\underset{(\ff,\DD,\JJ)\gets \YES^*}{\E}[\Pr[A'\accepts (\ff,\DD,\JJ)]]-\underset{(\gd,\DD,\JJ)\gets \NO^*}{\E}[\Pr[A'\accepts (\gd,\DD,\JJ)]]\right|\le\frac{1}{8}.$$
\item[C3.2] $A$ and $A'$ behave identically on $\YES$ and $\YES^*$, respectively. i.e,
$$\underset{(\ff,\DD,\JJ)\gets \YES^*}{\E}[\Pr[A'\accepts (\ff,\DD,\JJ)]]=\underset{(\ff,\DD,\JJ)\gets \YES^*}{\E}[\Pr[A\accepts (\ff,\DD)]].$$
\item[C3.3] $A'$ and $A$ behave similarly on $\NO$ and $\NO^*$, respectively. i.e,
$$\left|\underset{(\gd,\DD,\JJ)\gets \NO^*}{\E}[\Pr[A'\accepts (\gd,\DD,\JJ)]]-\underset{(\gd,\DD,\JJ)\gets \NO^*}{\E}[\Pr[A\accepts (\gd,\DD)]]\right|\le \frac{1}{8}.$$
\end{description}

\begin{table}[h]
\begin{tabular}{ccc}
\cline{1-1} \cline{3-3}
 \multicolumn{1}{|l|}{$\underset{(\ff,\DD,\JJ)\gets \YES^*}{\E}[\Pr[A'\accepts (\ff,\DD,\JJ)]]$}  & \ $\overset{{\mathlarger{\le \frac{1}{8}}}}{\rule{1.2cm}{1pt}}$ &
 \multicolumn{1}{|l|}{$\underset{(\gd,\DD,\JJ)\gets \NO^*}{\E}[\Pr[A'\accepts (\gd,\DD,\JJ)]]$}\\
 \cline{1-1} \cline{3-3}
 $\vrule\vrule\vrule$ $ \ =$ && $\vrule\vrule\vrule$ $\le\frac{1}{8}$\\
 \cline{1-1} \cline{3-3}
 \multicolumn{1}{|l|}{$\overset{\phantom{H}}{\underset{(\ff,\DD,\JJ)\gets \YES^*}{\E}}[\Pr[A\accepts (\ff,\DD)]]$ }&&
 \multicolumn{1}{|l|}{$\underset{(\gd,\DD,\JJ)\gets \NO^*}{\E}[\Pr[A\accepts (\gd,\DD)]]$ }\\
 \cline{1-1} \cline{3-3}
\end{tabular}
\end{table}

Obviously, {\bf C3.1-C3.3} imply {\bf C3}.

\subsection{Proof of C3.1}
In this subsection we prove
\begin{description}
\item[C3.1] $A'$ behaves similarly on $\YES^*$ and $\NO^*$, i.e,

$$\left|\underset{(\ff,\DD,\JJ)\gets \YES^*}{\E}[\Pr[A'\accepts (\ff,\DD,\JJ)]]-\underset{(\gd,\DD,\JJ)\gets \NO^*}{\E}[\Pr[A'\accepts (\gd,\DD,\JJ)]]\right|\le\frac{1}{8}.$$
\end{description}

\noindent
\fbox{%
    \parbox{\textwidth}{%
\begin{multicols}{2}
  \begin{minipage}{0.45\textwidth}
{\bf Algorithm $A'$ - $\YES$ distribution}
    \begin{enumerate}
   \setlength\itemsep{.1em}
    \item $(\ff,\DD,\JJ)\gets \YES^*$
    \item $\YY\gets \DD^q$;
    \item $\aalpha=\ff(\YY)$\\ If $(\YY,\aalpha,\JJ)$ is not consistent reject
    \item $\ZZ=A_1(\YY,\aalpha)$
    \item Draw $\hh'\gets\JUNTA_{\YY,\aalpha,\JJ}$; \\ $\bbeta=\hh'(\ZZ)$
    \item Output $A_2(\YY,\aalpha,\bbeta)$
    \end{enumerate}
  \end{minipage}
\begin{minipage}{0.45\textwidth}
    {\bf Algorithm $A'$ - $\NO$ distribution}
    \begin{enumerate}
   \setlength\itemsep{.1em}
    \item $(\gd,\DD,\JJ)\gets \NO^*$
    \item $\YY\gets \DD^q$;
    \item $\aalpha=\gd(\YY)$\\ If $(\YY,\aalpha,\JJ)$ is not consistent reject
    \item $\ZZ=A_1(\YY,\aalpha)$
    \item Draw $\hh'\gets\JUNTA_{\YY,\aalpha,\JJ}$; \\ $\bbeta=\hh'(\ZZ)$
    \item Output $A_2(\YY,\aalpha,\bbeta)$
    \end{enumerate}
  \end{minipage}
\end{multicols}}}\\

We say $Y$ is {\it scattered} by $J$ if there is no $i\not=j$ such that $y^{(i)}_J = y^{(j)}_J$.
The following claim shows that $\YY$ is scattered by $\JJ$ with high probability.

\begin{claim}\label{claim} We have that $\YY$ is scattered by $\JJ$ with probability at least $15/16$
\end{claim}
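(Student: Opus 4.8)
The plan is to bound the failure probability by a union bound over the $\binom{q}{2}$ unordered pairs of samples. Recall the relevant randomness: $\JJ$ is a uniform $k$-subset of $[n]$, $\SS$ is a uniform $m$-subset of $\{0,1\}^n$ drawn independently of $\JJ$, $\DD$ is uniform over $\SS$, and $\YY=(\yy^{(1)},\dots,\yy^{(q)})\gets\DD^q$. Then $\YY$ is scattered by $\JJ$ unless $\yy^{(i)}_\JJ=\yy^{(j)}_\JJ$ for some $i\neq j$, so I aim to show
$$\Pr\big[\YY\text{ not scattered by }\JJ\big]\le\binom{q}{2}\cdot\max_{i\neq j}\Pr\big[\yy^{(i)}_\JJ=\yy^{(j)}_\JJ\big]\le\frac{1}{16}.$$
The subtlety to keep in mind is that the samples $\yy^{(1)},\dots,\yy^{(q)}$ are \emph{not} independent (they are i.i.d.\ draws from the common random set $\SS$), so one cannot simply treat $\yy^{(1)}_\JJ,\dots,\yy^{(q)}_\JJ$ as $q$ independent uniform draws from $\{0,1\}^\JJ$ and invoke the birthday bound (Lemma~\ref{l2}) directly.

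First I would record the joint law of a single pair. Because $\SS$ is a uniformly random $m$-subset of $\{0,1\}^n$ and $\yy^{(i)},\yy^{(j)}$ are two independent uniform draws from $\SS$, a short computation shows that $\yy^{(i)}$ is uniform over $\{0,1\}^n$ and, conditioned on $\yy^{(i)}=y$, we have $\yy^{(j)}=y$ with probability $1/m$ while $\yy^{(j)}$ is uniform over $\{0,1\}^n\setminus\{y\}$ with probability $1-1/m$. Moreover $\JJ$ is independent of $\SS$ and of the sampling, hence independent of the pair $(\yy^{(i)},\yy^{(j)})$, so we may condition on $\JJ$; for every value of $\JJ$ the section $B_y:=\{x\in\{0,1\}^n:x_\JJ=y_\JJ\}$ has size $2^{n-k}$ and contains $y$. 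Therefore, since the conditional law of $\yy^{(j)}$ given $\yy^{(i)}=y$ does not depend on $\JJ$,
$$\Pr\big[\yy^{(i)}_\JJ=\yy^{(j)}_\JJ\ \big|\ \yy^{(i)}=y,\ \JJ\big]=\Pr\big[\yy^{(j)}\in B_y\ \big|\ \yy^{(i)}=y,\ \JJ\big]\le\frac{1}{m}+\frac{2^{n-k}-1}{2^{n}-1}\le\frac{1}{m}+2^{-k},$$
where the atom at $y$ contributes the $1/m$, and the remaining mass (at most $1$), being uniform over $\{0,1\}^n\setminus\{y\}$, gives $B_y\setminus\{y\}$ a fraction $(2^{n-k}-1)/(2^{n}-1)\le 2^{-k}$. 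Since this holds for every $y$ and every $\JJ$, $\Pr[\yy^{(i)}_\JJ=\yy^{(j)}_\JJ]\le 1/m+2^{-k}$.

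It then remains to plug in the parameters. Using $\binom{q}{2}\le q^2/2$ and $q=\frac{2^{k/2}}{8(1+2\lambda)^{k/2}}$, so that $q^2=\frac{2^k}{64(1+2\lambda)^k}\le\frac{2^k}{64}$ (as $\lambda\ge 0$), I get
$$\Pr\big[\YY\text{ not scattered by }\JJ\big]\le\frac{q^2}{2m}+\frac{q^2}{2^{k+1}}\le\frac{q^2}{2m}+\frac{1}{128}.$$
Finally, since $C$ contains all $k$-juntas we have $|C|\ge 2^{2^k}$, hence $m=18\ln|C|\ge 18\cdot 2^k\ln 2>12\cdot 2^k$ and $\frac{q^2}{2m}<\frac{2^k/64}{24\cdot 2^k}=\frac{1}{1536}$. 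Adding up, $\Pr[\YY\text{ not scattered by }\JJ]<\frac{1}{128}+\frac{1}{1536}=\frac{13}{1536}<\frac{1}{16}$, so $\YY$ is scattered by $\JJ$ with probability at least $15/16$.

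The only real obstacle is the dependence among the samples noted in the first paragraph; once the ``equal with probability $1/m$, otherwise uniform on the complement'' description of a pair is in hand, the per-pair bound splits cleanly into a collision-inside-$\SS$ term ($1/m$, the birthday contribution) and a distinct-strings-same-$\JJ$-projection term ($\le 2^{-k}$, valid because all $2^k$ sections with respect to $\JJ$ have the same size $2^{n-k}$). An alternative I would keep in reserve is to first condition on the event that the $q$ samples are pairwise distinct (probability at least $1-q^2/(2m)$ by Lemma~\ref{l2} applied with $X=\SS$), observe that conditionally $\YY$ is a uniform $q$-subset of $\{0,1\}^n$ independent of the uniform $k$-subset $\JJ$, and then union-bound over pairs using $(2^{n-k}-1)/(2^{n}-1)\le 2^{-k}$; this yields the same estimate.
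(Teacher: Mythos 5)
Your proof is correct, and it takes a genuinely different route from the paper. The paper fixes $J$ and compares the true sampling distribution $D_1$ (draw $\SS$, then $q$ draws from $\SS$ with replacement) against the idealized $D_2$ ($q$ i.i.d.\ uniform draws from $\{0,1\}^n$): it bounds $\|D_1-D_2\|_{tv}$ by $q^2/(2m)+q^2/2^{n+1}\le 1/32$ via Lemma~\ref{L4} applied to the event that the samples are distinct (together with the birthday bound, Lemma~\ref{l2}), and then notes that under $D_2$ the projections $\yy^{(i)}_J$ are i.i.d.\ uniform on $\{0,1\}^k$, so the birthday bound gives $\Pr_{D_2}[\text{collision}]\le q^2/2^{k+1}\le 1/32$. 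You instead attack the dependence among the samples head-on at the level of a single pair: you work out the exact joint law of $(\yy^{(i)},\yy^{(j)})$ (equal with probability $1/m$, otherwise the second is uniform on the complement of the first), deduce the per-pair collision bound $1/m+2^{-k}$, and union-bound over $\binom{q}{2}$ pairs. The two estimates are numerically the same in structure ($q^2/(2m)$ for the in-$\SS$ collision term plus $q^2/2^{k+1}$ for the section-collision term), but your argument is more elementary and self-contained --- it avoids the total-variation machinery entirely and yields the sharper bound $13/1536$ rather than $1/16$. What the paper's formulation buys is modularity: the same ``swap $D_1$ for $D_2$ at a small TV cost'' device reappears in the proof of $E_1$ (replacing $\SS\setminus\YY$ by $\TT$), whereas your pairwise computation is specific to this claim. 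Your ``alternative in reserve'' (condition on distinctness, then treat $\YY$ as a uniform $q$-subset) is essentially the paper's argument in disguise. All the numerical checks in your write-up go through: $(2^{n-k}-1)/(2^n-1)\le 2^{-k}$, $q^2\le 2^{k}/64$, and $m\ge 18\cdot 2^k\ln 2>12\cdot 2^k$ since $|C|\ge 2^{2^k}$.
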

\begin{proof} We fix $J$ and show that $\YY$ is scattered by $J$ with probability at least $15/16$.
We now define the following distributions $D_1$ and $D_2$ for $Y$.
\begin{enumerate}
\item $D_1$: Draw a subset $\SS$ of $\{0,1\}^n$ of size $m$ uniformly at random. Then choose $q$ strings $\YY=(\yy^{(i)}:i\in[q])$ independently and uniformly at random from $\SS$ with replacement.
\item $D_2$: Choose $q$ strings $\YY=(\yy^{(i)}:i\in[q])$ independently and uniformly at random from $\{0,1\}^n$ with replacement.
\end{enumerate}
Let $F$ be the event: $\YY \mbox{\ is not scattered by}\ J$. We need to show that
$$\underset{\YY\gets D_1}{\Pr}[F]\le \frac{1}{16}.$$
Let $U$ be the event that the strings in $\YY$ are distinct. It is clear that for any event $E$ we have that $\Pr_{\YY\gets D_1}[E|U]=\Pr_{\YY\gets D_2}[E|U]$. By Lemma~\ref{L4} and Lemma~\ref{l2}, the total variation distance between $D_1$ and $D_2$ is ($q\le 2^{k/2-3}$ and $m= 18\ln |C|\ge 2^{k}$)
$$\|D_1-D_2\|_{tv}\le \underset{\YY\gets D_1}{\Pr}[\overline{U}]+\underset{\YY\gets D_2}{\Pr}[\overline{U}]\le \frac{q^2}{2m}+\frac{q^2}{2^{n+1}}\le \frac{q^2}{m}\le \frac{1}{32}.$$ Since, by Lemma~\ref{L3}, $\Pr_{\YY\gets D_1}[F]\le \Pr_{\YY\gets D_2}[F]+1/32$, it remains to show that $\Pr_{\YY\gets D_2}[F]\le 1/32$.

Since $(\yy^{(i)}:i\in[q])$  are chosen independently and uniformly at random from $\{0,1\}^n$ with replacement, we have that $(\yy^{(i)}_J:i\in[q])$ are chosen independently and uniformly at random from $\{0,1\}^k$ with replacement. Thus, by Lemma~\ref{l2} ($q\le 2^{k/2-3}$),
\begin{eqnarray}
\underset{\YY\gets D_2}{\Pr}[F]\le \frac{q^2}{2^{k+1}}\le \frac{1}{32}\label{Cla03}
\end{eqnarray}
and the result follows.
\end{proof}

Since $A'$ runs on $(Y, \alpha, J )$, by Lemma~\ref{TVDX}, it suffices to show that the distributions of
$(\YY,\aalpha, \JJ)$ induced from $\YES^*$ and $\NO^*$
have total variation distance less than or equal to $1/8$. For this purpose, we
first note that the distributions of $(\YY, \JJ)$ induced from $\YES^*$
and $\NO^*$ are identical: In both cases,
$\YY$ and $\JJ$ are independent; $\JJ$ is a random subset of $[n]$ of size $k$; $\YY$ is obtained by first sampling a subset $\SS$ of $\{0, 1\}^n$ of size $m$ and then drawing a sequence of $q$ strings from $\SS$ with replacement.

Fix any $(Y, J)$ in the support of $(\YY, \JJ)$. By Lemma~\ref{L5}, it is enough to show that the total variation of the distributions of $\aalpha$ conditioning on $(\YY, \JJ) = (Y, J)$ in the $\YES^*$
case and the $\NO^*$ case is less than $1/8$.

Fix any $(Y, J)$ in the support of $(\YY, \JJ)$ such that $Y$ is scattered by $J$. By Claim~\ref{claim} and Lemma~\ref{L4} it is enough to show that the distributions of $\aalpha$ conditioning on $(\YY, \JJ) = (Y, J)$ in the $\YES^*$
case and the $\NO^*$ case are identical.

For $Y=(y^{(i)}: i\in [q])$ the string $\aalpha=(\aalpha_i:i\in [q])$ is uniform over strings of length $q$
in both cases. This is trivial for $\NO^*$.
For $\YES^*$ note that $\aalpha$ is determined by the random $k$-junta
$\ff \gets \JUNTA_J$ ; the claim follows from the assumption that $Y$ is scattered by $J$.

\subsection{Proof of C3.2}
In this subsection we prove
\begin{description}
\item[C3.2] $A$ and $A'$ behave identically on $\YES$ and $\YES^*$, respectively. i.e,
$$\underset{(\ff,\DD,\JJ)\gets \YES^*}{\E}[\Pr[A'\accepts (\ff,\DD,\JJ)]]=\underset{(\ff,\DD,\JJ)\gets \YES^*}{\E}[\Pr[A\accepts (\ff,\DD)]].$$

\end{description}

\noindent
\fbox{%
    \parbox{\textwidth}{%
\begin{multicols}{2}
\begin{minipage}{0.45\textwidth}
    {\bf Algorithm $A'$}
    \begin{enumerate}
   \setlength\itemsep{.1em}
    \item $(\ff,\DD,\JJ)\gets \YES^*$
    \item $\YY\gets \DD^q$
    \item $\aalpha=\ff(\YY)$\\ If $(\YY,\aalpha,\JJ)$ is not consistent reject
    \item $\ZZ=A_1(\YY,\aalpha)$
    \item Draw $\hh'\gets\JUNTA_{\YY,\aalpha,\JJ}$; \\ $\bbeta=\hh'(\ZZ)$
    \item Output $A_2(\YY,\aalpha,\bbeta)$
    \end{enumerate}
  \end{minipage}
  \begin{minipage}{0.45\textwidth}
    {\bf Algorithm $A$}
    \begin{enumerate}
    \setlength\itemsep{.1em}
    \item $(\ff,\DD,\JJ)\gets \YES^*$
    \item $\YY\gets \DD^q$
    \item $\aalpha=\ff(\YY)$\\
    \item $\ZZ=A_1(\YY,\aalpha)$
    \item Let \\ $\bbeta=\ff(\ZZ)$
    \item Output $A_2(\YY,\aalpha,\bbeta)$
    \end{enumerate}
  \end{minipage}
\end{multicols}}}\\

For the first expectation in {\bf C3.2}, since the triple $(\YY,\aalpha,\JJ)$ on
which we run $A'$ is always consistent, we can rewrite it as the probability that
$$A_2(\YY,\aalpha,\hh'(A_1(\YY,\aalpha)))=1,$$
where $(\ff,\DD,\JJ)\gets \YES^*$, $\YY\gets \DD^q$, $\aalpha=\ff(\YY)$ and $\hh'\gets \JUNTA_{\YY,\aalpha,\JJ}$.

The second expectation is equal to the probability that
$$A_2(\YY,\aalpha,\ff(A_1(\YY,\aalpha)))=1$$ where $(\ff,\DD,\JJ)\gets \YES^*$, $\YY\gets \DD^q$ and $\aalpha=\ff(\YY)$.

To show that these two probabilities are equal, we first note that the distributions of $(\YY,\aalpha, \JJ)$ are
identical. Fixing any triple $(Y, \alpha, J )$ in the support of $(\YY,\aalpha, \JJ)$, which must be consistent, we claim that the distribution of $\ff$ conditioning on $(\YY,\aalpha, \JJ)=(Y, \alpha, J )$  is exactly $\JUNTA_{\YY,\aalpha, \JJ}$. This is because, for each $z \in \{0, 1\}^J$ , if $y^{(i)}_J = z$ for some $y^{(i)}$ in $Y$, then we have $\ff (x) = \alpha_i$ for all strings $x$
with $x_J = z$; otherwise, we have $\ff (x) = \bb(z)$ for all $x$ with $x_J = z$, where $\bb(z)$ is an independent and uniform bit. This is the same as how $\hh' \gets \JUNTA_{Y,\alpha, J}$ is generated. It follows directly from this claim that the two probabilities are the same. This finishes the proof of {\bf C3.2}.

\subsection{Proof of C3.3}
In this subsection we prove
\begin{description}
\item[C3.3] $A'$ and $A$ behave similarly on $\NO$ and $\NO^*$, respectively. i.e,
$$\left|\underset{(\gd,\DD,\JJ)\gets \NO^*}{\E}[\Pr[A'\accepts (\gd,\DD,\JJ)]]-\underset{(\gd,\DD,\JJ)\gets \NO^*}{\E}[\Pr[A\accepts (\gd,\DD)]]\right|\le \frac{1}{8}.$$
\end{description}

\noindent
\fbox{%
    \parbox{\textwidth}{%
\begin{multicols}{2}
\begin{minipage}{0.45\textwidth}
    {\bf Algorithm $A'$}
    \begin{enumerate}
   \setlength\itemsep{.1em}
    \item $(\gg,\DD,\JJ)\gets \NO^*$
    \item $\YY\gets \DD^q$
    \item $\aalpha=\gg(\YY)$\\ If $(\YY,\aalpha,\JJ)$ is not consistent reject
    \item $\ZZ=A_1(\YY,\aalpha)$
    \item Draw $\hh'\gets\JUNTA_{\YY,\aalpha,\JJ}$; \\ $\bbeta=\hh'(\ZZ)$
    \item Output $A_2(\YY,\aalpha,\bbeta)$
    \end{enumerate}
  \end{minipage}
  \begin{minipage}{0.45\textwidth}
    {\bf Algorithm $A$}
    \begin{enumerate}
    \setlength\itemsep{.1em}
    \item $(\gg,\DD,\JJ)\gets \NO^*$
    \item $\YY\gets \DD^q$
    \item $\aalpha=\gg(\YY)$\\
    \item $\ZZ=A_1(\YY,\aalpha)$
    \item Let \\ $\bbeta=\gg(\ZZ)$
    \item Output $A_2(\YY,\aalpha,\bbeta)$
    \end{enumerate}
  \end{minipage}
\end{multicols}}}\\

We remind the reader that
$$\gd(x)=\left\{
\begin{array}{ll}
\ggamma(x)&  x\in \SS \\
\hh(x)&  x\not\in \SS, (\forall y\in \SS)\ x_\JJ\not=y_\JJ \mbox{\ or\ } d(x,y)>(0.5-\lambda)n\\
\ggamma(y)&  x\not\in \SS, (\exists y\in \SS)\ x_\JJ=y_\JJ \mbox{\ and\ } d(x,y)\le (0.5-\lambda)n\ \ \ .
\end{array}
\right.
$$

The first expectation in {\bf C3.3} is equal to the probability of
$$(\YY,\aalpha,\JJ)\mbox{\ is consistent and\ } A_2(\YY,\aalpha,\hh'(A_1(\YY,\aalpha)))=1,$$
where $(\gg,\DD,\JJ)\gets \NO^*$, $\YY\gets\DD^q$, $\aalpha=\gg(\YY)$, and $\hh'\gets \JUNTA_{\YY,\aalpha,\JJ}$.

The second expectation is the probability of
$$A_2(\YY,\aalpha,\gg(A_1(\YY,\aalpha)))=1,$$
where $(\gg,\DD,\JJ)\gets \NO^*$ and $\aalpha=\gg(\YY)$.

The distributions of $(\YY,\aalpha,\JJ,\DD)$ in the two cases are identical.

We say that a tuple $(Y,\alpha,J,\D)$ in the support of $(\YY,\aalpha,\JJ,\DD)$ is {\it good} if it satisfies the following three conditions: Here $Z=A_1(Y,\alpha)$ and $S$ is the support of $\D$
\begin{description}
\item[$E_0:$] $Y$ is scattered by $J$.
\item[$E_1:$] Every $z$ in $Z$ and every $x\in S\backslash\{y^{(i)}:i\in [q]\}$ have $d(x,z)> (0.5-\lambda)n$.
\item[$E_2:$] If a string $z$ in $Z$ satisfies $z_J=y_J$ for some $y$ in $Y$, then we have $d(y,z)\le (0.5-\lambda)n$.
\end{description}
We delay the proof of the following claim to the end.
\begin{claim}\label{claim2} We have that $(\YY,\aalpha,\JJ,\DD)$ is good with probability at least 7/8.\end{claim}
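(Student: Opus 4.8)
I would bound the three "bad" events $\overline{E_0},\overline{E_1},\overline{E_2}$ separately and take a union bound, aiming for each to have probability at most some small constant so that their sum is at most $1/8$. The key observation is that in the $\NO^*$ distribution, $\JJ$ is a uniformly random $k$-subset of $[n]$ chosen \emph{independently} of everything the algorithm sees before it fixes $Z$; more precisely, $\ZZ=A_1(\YY,\aalpha)$ is a deterministic function of $(\YY,\aalpha)$, and the distribution of $(\YY,\aalpha)$ does not involve $\JJ$ at all (in $\NO^*$, $\aalpha=\gg(\YY)$ is a uniformly random $q$-bit string independent of $\JJ$, as was already used in the proof of {\bf C3.1}). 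So one can condition on $(\YY,\aalpha)=(Y,\alpha)$, hence on $\ZZ=Z=A_1(Y,\alpha)$ and on $\SS=S$, and then take the probability over the still-uniform random choice of $\JJ$.

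For $\overline{E_0}$: this is exactly Claim~\ref{claim}, giving probability at most $1/16$. For $\overline{E_2}$: fix $Z$ and $Y$; for a pair $(z,y^{(i)})$ with $z\in Z$, $y^{(i)}\in Y$, the event $z_\JJ=y^{(i)}_\JJ$ already forces $z$ and $y^{(i)}$ to agree on all $k$ coordinates of $\JJ$. Conditioned on $z_\JJ = y^{(i)}_\JJ$, the set $\JJ$ is a uniform random $k$-subset of the coordinates on which $z$ and $y^{(i)}$ agree. If $z\neq y^{(i)}$ disagree in a $\rho$ fraction of coordinates with $\rho$ "not too small," then a Chernoff bound shows that a random $k$-subset of the agreement set is extremely unlikely to also witness $d(y^{(i)},z)>(0.5-\lambda)n$ — wait, more carefully: $E_2$ wants $d(y^{(i)},z)\le(0.5-\lambda)n$ whenever $z_\JJ=y^{(i)}_\JJ$, and the relevant failure is the \emph{joint} event $z_\JJ=y^{(i)}_\JJ$ \emph{and} $d(y^{(i)},z)>(0.5-\lambda)n$; I would bound $\Pr[z_\JJ=y^{(i)}_\JJ]$ directly by $\binom{n-d}{k}/\binom{n}{k}\le(1-d/n)^k\le e^{-\lambda k}$-type estimates when $d=d(y^{(i)},z)>(0.5-\lambda)n$, which is exponentially small in $k$, and then union bound over the at most $q^2\le 2^{k-6}$ pairs. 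For $\overline{E_1}$: for $z\in Z$ and $x\in S\setminus Y$, we want $d(x,z)>(0.5-\lambda)n$; since $x$ was a uniformly random string of $\{0,1\}^n$ (part of the random set $\SS$) independent of $z$, $d(x,z)$ is a $\mathrm{Bin}(n,1/2)$ random variable, and $\Pr[d(x,z)\le(0.5-\lambda)n]\le e^{-2\lambda^2 n}$ by the lower-tail Chernoff bound (\ref{Chernoff2}); with $\lambda^2 n = 5+\ln\ln|C|+k/2$ this is at most $e^{-(10+2\ln\ln|C|+k)}$, and union-bounding over the $\le qm\le 2^{k/2-3}\cdot 18\ln|C|$ pairs $(z,x)$ keeps it below a small constant. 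Summing the three contributions gives at most $1/8$.

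**Main obstacle.** The delicate point is {\bf not} the Chernoff estimates but correctly untangling the dependency structure so that these estimates are legitimate: one must argue that conditioning on $(\YY,\aalpha)$ (equivalently on $\ZZ$ and on the part of $\SS$ probed so far) still leaves $\JJ$ uniform over $k$-subsets of $[n]$ and leaves the remaining strings of $\SS$ uniform in $\{0,1\}^n$. This is where the $\YES^*/\NO^*$ formalism pays off — because in $\NO^*$ the function values $\aalpha=\gg(\YY)$ are a uniform random bit string regardless of $\JJ$, the pair $(\YY,\aalpha,\ZZ)$ genuinely carries no information about $\JJ$, so $E_0,E_1,E_2$ can each be analyzed as "a fixed configuration of points meets a fresh uniform random $\JJ$ (and fresh uniform random padding strings)." Once that is set up cleanly, the numeric bounds are exactly the kind already carried out in Claim~\ref{claim}, and picking the constants so that $1/16$ (for $E_0$) plus two more pieces comes out to $\le 1/8$ is routine given the stated values of $q$ and $\lambda$.
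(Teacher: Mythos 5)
Your proposal follows essentially the same route as the paper: a union bound over $\overline{E_0},\overline{E_1},\overline{E_2}$, with $E_0$ handled by Claim~\ref{claim}, $E_2$ by the independence of $\JJ$ from $(\YY,\aalpha)$ together with the bound $\binom{(0.5+\lambda)n}{k}/\binom{n}{k}\le(0.5+\lambda)^k$, and $E_1$ by a lower-tail Chernoff bound on the Hamming distance between strings of $Z$ and the remaining strings of $\SS$. The one step you wave at rather than execute is treating $\SS\setminus Y$ as i.i.d.\ uniform strings: the paper makes this legitimate by coupling $\SS\setminus Y$ with $m-\ell$ uniform draws with replacement and paying a total-variation cost of $(mq+m^2)/2^n$ (via Lemmas~\ref{L4E} and~\ref{l2}), which is exactly the dependency issue you correctly flag as the main obstacle.
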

Fix any good $(Y,\alpha,J,\D)$ in the support and let $Z=A_1(Y,\alpha)$. We first show that since $Y$ is scattered by $J$ we have that $(Y,\alpha,J)$ is consistent. Let $i,j\in[q]$ with $y_J^{(i)}=y^{(i)}_J$. Since $Y$ is scattered by $J$ we have $i=j$ and therefore $\alpha_i=g(y^{(i)})=g(y^{(j)})=\alpha_j$. Therefore $(Y,\alpha,J)$ is consistent.

We finish the proof by showing
that the distribution of $\gg(Z)$, a binary string of length $q$, conditioning on $(\YY,\aalpha,\JJ,\DD) = (Y,\alpha,J,\D)$ is the same as that of $\hh'(Z)$ with $\hh'\gets \JUNTA_{Y,\alpha,J}$. This combined with Lemmas~\ref{L4E},~\ref{L5} and Claim~\ref{claim2} implies that the difference of the two probabilities has absolute value at most $1/8$.
To see this is the case, we partition strings of $Z$ into $Z_w$, where each $Z_w$ is a nonempty set that
contains all $z$ in $Z$ with $z_J = w \in  \{0, 1\}^J$ . For each $Z_w$, we consider the following two cases:
\begin{description}
\item[Case I.] There exists $y^{(i)}$ in $Y$ with $y^{(i)}_J=w$. By $E_0$, $y^{(i)}$ is the only string $y$ in $Y$ that satisfies $y_J=w$. By $E_2$, every $z\in Z_w$ satisfies $d(z,y^{(i)})\le (0.5-\lambda)n$. By $E_1$, every $z\in Z_w$ and every $y\in S\backslash \{y^{(i)}:i\in [q]\}$ we have $d(x,z)>(0.5-\lambda)n$. Therefore, the only $y$ in $S$ that satisfies $y_J=w$ and $d(z,y)\le (0.5-\lambda)n$ is $y^{(i)}$. Therefore, for every $z\in Z_w$ we have $\gg(z)=\gamma(y^{(i)})=\alpha_i$. On the other hand, for every $z\in Z_w$ and $\hh'\gets\JUNTA_{Y,\alpha,J}$ we have $\hh'(z)=\alpha_i$.
\item[Case II.] There exists no $y$ in $Y$ with $y_J=w$. By $E_1$ for every $z\in Z_w$ and every $x\in S\backslash \{y^{(i)}:i\in [q]\}$ we have that $d(x,z)\ge (0.5-\lambda)n$. Therefore for every $z\in Z_w$ and every $x\in S$ we have that $x_J\not=z_J$ or $d(x,z)\ge (0.5-\lambda)n$. Thus, for every $z\in Z_w$ we have that $g(z)=\hh(z) =\bb(w)$ for some uniform bit $\bb(w)$. The same is true for $\hh'\gets\JUNTA_{Y,\alpha,J}$.
\end{description}

So the conditional distribution of $\gg(Z)$ is identical to that of $\hh'(Z)$ with $\hh'\gets\JUNTA_{Y,\alpha,J}$.
This finishes the proof of {\bf C3}.

Now to prove Claim~\ref{claim2}, we show that $\Pr[\overline{E_0}]\le 1/16$ and $\Pr[\overline{E_1}],\Pr[\overline{E_2}]\le 1/32$. By the union bound we get
$$\Pr[E_1\mbox{\ and\ } E_2\mbox{\ and\ } E_3]\ge 1-\Pr[\overline{E_0}]-\Pr[\overline{E_1}]-\Pr[\overline{E_2}]\ge \frac{7}{8}.$$

\subsection{The Proof for $E_0$ and $E_1$}
From Claim~\ref{claim}, we have
$$\Pr[\overline{E_0}]\le \frac{1}{16}.$$

We now prove that with probability at most $1/32$,
\begin{description}
\item[$\overline{E_1}$:] There exists $z$ in $\ZZ$ and $x\in \SS\backslash\{\yy^{(i)}:i\in [q]\}$ such that $d(x,z)\le (0.5-\lambda)n$.
\end{description}

To prove that $\Pr[\overline{E_1}]\le {1}/{32}$, we fix a pair $(Y, \alpha)$ in the support and let $\ell\le q$ be the number of distinct strings in $Y$ and $Z = A_1 (Y, \alpha)$. Conditioning on $\YY = Y, \SS \backslash \YY$ is a uniformly random subset of $\{0, 1\}^n \backslash Y$ of size $m - \ell$. Instead of working with $\SS \backslash \YY$, we let $\TT$ denote a set obtained by making $m - \ell$ draws from $\{0, 1\}^n$ uniformly at random (with replacements).
On the one hand, by Lemma~\ref{L4E}, the total variation distance between $S \backslash Y$ and $\TT$ is exactly the probability that
either (1) $\TT \cap Y$ is nonempty or (2) $|\TT| < m - \ell$. By two union bounds, (1) happens with probability
$1-(1-\ell/2^n)^{m-\ell}\le (m - \ell) \cdot (\ell/2^n ) \le mq/2^n$ and, by Lemma~\ref{l2}, (2) happens with probability at most $m^2/2^{n}$. As a result, the total variation distance is at most $(mq +m^2)/2^n$. On the other hand, by Chernoff bound (\ref{Chernoff2}) in Lemma~\ref{Chernoff}, the probability that one of
the strings of $\TT$ has distance at most $(0.5-\lambda)n$ with one of the strings of $Z$ is at most $mq · \exp(-\lambda^2n)$. Thus, by union bound ($n\ge 15+2\log\log |C|$, $q\le 2^{k/2-3}$ and $m=18\ln |C|$)
$$\Pr[\overline{E_2}]\le \frac{mq +m^2}{2^n} +mq \cdot e^{-\lambda^2n} \le \frac{m^2}{2^{n-1}} +mq \cdot e^{-\lambda^2n}\le\frac{1}{64}+\frac{1}{64}\le\frac{1}{32}.$$

\subsection{The Proof for $E_2$}
We now prove that with probability at most $1/32$,
\begin{description}
\item[$\overline{E_2}$:] There exists two strings $z$ in $\ZZ$ and $y$ in $\YY$ that satisfies $z_\JJ=y_\JJ$ and $d(y,z)> (0.5-\lambda) n$.
\end{description}
Fix a pair $(Y, \alpha)$ in the support and let $Z = A_2 (Y, \alpha)$. Because $\JJ$ is independent from
$(\YY,\aalpha)$, it remains a subset of $[n]$ of size $k$ drawn uniformly at random. For each pair $(y, z)$ with $y$
from $Y$ and $z$ from $Z$ that satisfy $d(y, z) > (0.5-\lambda)n$, the probability of $y_J = z_J$ is at most
$$\frac{{(0.5+\lambda)n\choose k}}{{n\choose k}}\le (0.5+\lambda)^k.$$

Then
\begin{eqnarray*}
\Pr[\overline{E_2}]
&\le& q^2\cdot \frac{{(0.5+\lambda)n\choose k}}{{n\choose k}}\le q^2(0.5+\lambda)^k \le\frac{1}{32}.
\end{eqnarray*}

\bibliography{TestingRef}

\begin{thebibliography}{1}

\bibitem{Bshouty19}
Nader~H. Bshouty.
\newblock Almost optimal distribution-free junta testing.
\newblock In {\em 34th Computational Complexity Conference, {CCC} 2019, July
  18-20, 2019, New Brunswick, NJ, {USA}}, pages 2:1--2:13, 2019.
\newblock URL: \url{https://doi.org/10.4230/LIPIcs.CCC.2019.2}, \href
  {http://dx.doi.org/10.4230/LIPIcs.CCC.2019.2}
  {\path{doi:10.4230/LIPIcs.CCC.2019.2}}.

\bibitem{LiuCSSX18}
Zhengyang Liu, Xi~Chen, Rocco~A. Servedio, Ying Sheng, and Jinyu Xie.
\newblock Distribution-free junta testing.
\newblock In {\em Proceedings of the 50th Annual {ACM} {SIGACT} Symposium on
  Theory of Computing, {STOC} 2018, Los Angeles, CA, USA, June 25-29, 2018},
  pages 749--759, 2018.
\newblock URL: \url{https://doi.org/10.1145/3188745.3188842}, \href
  {http://dx.doi.org/10.1145/3188745.3188842}
  {\path{doi:10.1145/3188745.3188842}}.

\end{thebibliography}

\end{document}